\newcommand{\be}{\begin{equation}}
	\newcommand{\ee}{\end{equation}}
\newcommand{\ba}{\begin{eqnarray}}
	\newcommand{\ea}{\end{eqnarray}}
\newtheorem{thm}{Theorem}
\DeclareMathOperator*{\Motimes}{\text{\raisebox{0.25ex}{\scalebox{0.65}{$\bigotimes$}}}}
\begin{document}

\title{Self-testing in  a constrained prepare-measure scenario sans assuming quantum dimension}

\author{Ritesh K. Singh}
\email{riteshcsm1@gmail.com}
\affiliation{Department of Physics, Indian Institute of Technology Hyderabad, Kandi, Sangareddy, Telangana 502285, India}

\author{Souradeep Sasmal}
\email{souradeep.007@gmail.com}
\affiliation{Institute of Fundamental and Frontier Sciences, University of Electronic Science and Technology of China, Chengdu 611731, China}

\author{S. Nautiyal}
\email{sameernautiyal02@gmail.com}
\affiliation{Department of Physics, Indian Institute of Technology Hyderabad, Kandi, Sangareddy, Telangana 502285, India}

\author{A. K. Pan }
\email{akp@phy.iith.ac.in}
\affiliation{Department of Physics, Indian Institute of Technology Hyderabad, Kandi, Sangareddy, Telangana 502285, India}


\begin{abstract}
We present a device-independent (DI) self-testing protocol in a constrained prepare-measure scenario, based on the $n-$bit parity-oblivious multiplexing (POM) task. In this scenario, a parity-oblivious constraint is imposed on the preparations, allowing us to define a classical bound derived from a preparation noncontextual ontological model. We derive the optimal quantum success probability in the POM task devoid of assuming the dimension of the quantum system, an essential step towards DI self-testing, which has hitherto not been demonstrated in prepare-measure scenario. We demonstrate that the optimal quantum value exceeds preparation noncontextual bound and, as a result, this establishes DI self-testing of the preparations and the measurement devices. Furthermore, by explicitly constructing the required unitaries, we show that the optimal preparations and measurements in an unknown but finite dimensional Hilbert space, responsible for the observed input-output correlations, can be mapped, via an unitary, onto a known finite-dimensional quantum system. Our results thus pave the way for scalable, single system based DI certification protocols in the prepare-measure scenario.
 \end{abstract}
 
\pacs{} 

\maketitle


\section{Introduction} 
  
Since its inception, Bell’s theorem, originally formulated to address the foundational questions about the viability of a local realist description of quantum theory \cite{Bell1964}, has evolved into a powerful framework for processing quantum information \cite{Brunner2014}.  One of its key strengths lies in its ability to power DI self-testing \cite{Mayers2004, Supic2020} - the strongest form of certification of quantum devices solely from observed input-output statistics, requiring no assumptions about the internal mechanism of the devices or the dimensionality of the underlying quantum systems. This makes Bell’s theorem central to numerous applications such as, quantum key distribution \cite{Ekert1991, Acin2007}, and certified randomness generation \cite{Liu2021, Wooltron2022, Zhang2025}.

However, the practical implementation of DI Bell test poses a significant challenge. In particular, it is required to ensure that the local events of spatially separated parties are space-like separated for a loophole-free Bell test, which remains experimentally demanding. Although progress has been made in performing such tests \cite{Hensen2015, Bierhorst2018, Li2018, Liu2021, Storz2023, Zhao2024}, the design of scalable and robust near-term quantum technologies continues to face considerable constraints. In response to these challenges, prepare-measure certification protocols have emerged as promising alternatives \cite{Brakerski2018, Um2020, Pivoluska2021, Gois2021, Gitton2022, Divianszky2023, Navascues2023, Svegborn2025, Ding2025}. These semi-DI protocols operate under partial assumptions, most commonly, a known upper bound on the Hilbert space dimension of the quantum system, and thus relax the stringent requirements of fully DI protocols. Importantly, such protocols \cite{Galvao2001,Ambainis2008} do not necessitate entanglement, but devices remain uncharacterised. Due to their relative experimental simplicity, semi-DI protocols are gaining considerable relevance in implementations of near-term quantum technologies \cite{Brakerski2018, Um2020, Pivoluska2021, Galvao2001, Ambainis2008, Pawlowski2011, Brunner2013, Tavakoli2015, Guerin2016, Chailloux2016, Himbeeck2017, Tavakoli2018, Mohan2019,Ambainis2019, Saha2019, Saha2019a, Miklin2020,Tavakoli2020, Mukherjee2021, Wei2021, Abhyoudai2023}. 

In a prepare-measure scenario, the distinction between classical and quantum correlations is usually demonstrated via communication tasks involving two parties: a sender (Alice) and a receiver (Bob). Alice encodes classical information into physical systems and transmits them to Bob, who attempts to retrieve the information by performing suitable measurements adhering to the winning rule of the game. The success probability of such tasks serves as a figure of merit. In recent years, a multitude of communication games have been developed, demonstrating a quantum advantage under dimension constraints \cite{Ambainis2008, Galvao2001, Pawlowski2011, Brunner2013, Tavakoli2015, Guerin2016, Chailloux2016, Himbeeck2017, Tavakoli2018, Mohan2019, Saha2019a, Tavakoli2020, Pauwels2022}. These games have found diverse applications in semi-DI self-testing protocols and in the development of quantum technologies, including key distribution \cite{Bennet1984,Cerf2002}, randomness generation \cite{Um2020, Pivoluska2021, Lunghi2015}, and certification of unsharp measurements \cite{Mohan2019,Miklin2020, Mukherjee2021, Abhyoudai2023}.

As noted previously, self-testing in prepare-measure scenarios typically relies on prior knowledge of the system's dimension, making such approaches inherently semi-DI. It is even argued that this dimensional assumption is essential to enable self-testing \cite{Ambainis2008}. Although a recent scheme has been proposed \cite{Miklin2021} to self-test states and measurements in arbitrary dimensions, it nevertheless requires an upper bound on the dimension of the Hilbert space. As recently highlighted in \cite{Pauwels2022}, real-world devices can inadvertently access higher-dimensional subspaces, potentially undermining conclusions based on dimension-bounded assumptions. This underscores the need for dimension-independent self-testing schemes within the prepare-measure framework.

This work addresses this need by presenting a fully DI self-testing protocol within a constrained prepare-measure scenario. Specifically, we consider the $n$-bit POM task, first introduced by Spekkens et al. \cite{Spekkens2009}, in which parity-oblivious constraint are imposed on Alice's preparations. It has been argued that the parity-obliviousness imposed at the operational level is equivalently represented at the ontic state (hidden variable) level if the ontological model is preparation noncontextual \cite{Spekkens2009}, which is a notion of classicality.  Therefore, demonstrating quantum advantage in the POM  task requires a comparison with the preparation noncontextual ontological model.

We show that the optimal quantum success probability of the POM task exceeds the preparation noncontextual bound. Crucially, our derivation of the optimal quantum bound is independent of the dimension of the underlying system. The requirement of satisfying parity-obliviousness of the inputs plays a crucial role in enabling such a dimension-independent derivation. Thus, we establish a fully DI self-testing protocol in the prepare-measure scenario, thereby advancing the foundational understanding of contextuality and enabling certification of both quantum preparations and measurements devices without any dimensional assumptions.

Moreover, by explicitly constructing the requisite unitary operations, we demonstrate that observing the optimal correlations allows us to map the unknown finite dimensional states and measurements onto an equivalent realisation within a fully characterised finite-dimensional quantum system using the aforementioned constructed unitary. This mapping provides a correspondence between the theoretical DI bound and physically realizable quantum strategies, thereby laying a foundation for a self-testing framework that is both operationally meaningful and implementable  within constrained prepare-measure scenarios.

The paper is organised as follows. In Sec.~\ref{pncpom}, we introduce the conceptual framework of preparation noncontextual ontological models and describe the $n-$bit POM task. Sec.~\ref{PORAC} formalises the constrained prepare-measure communication scenario and establishes the role of parity oblivious constraint in deriving the preparation noncontextual bound of the success probability. In Sec.~\ref{optqn}, we derive the optimal quantum success probability for the $n-$bit POM task using a dimension-independent approach. Sec.~\ref{stspm} presents the self-testing statements and provides the explicit construction of an unitary that maps the DI description to a known finite-dimensional quantum realisation. Finally, in Sec.~\ref{conclu}, we summarise our findings and discuss future directions.

\section{Preparation noncontextuality and POM task} \label{pncpom}

We briefly encapsulate the notion of preparation non-contextuality in an ontological model of quantum theory and the POM task in a prepare-measure scenario. We invoke a modern  framework of an ontological model \cite{spekkens2005,harrigan2010} of quantum theory to introduce the notion of preparation noncontextuality.  In quantum theory, a preparation procedure $(P)$ produces a density matrix $\rho$ and measurement procedure $(M)$ is in general characterized by positive-operator-valued-measure $(E_k$). The probability of a particular outcome $ k $ is determined by the Born rule, that is $p(k|P, M)=\Tr[\rho E_{k}]$. 

In an ontological model of quantum theory, the preparation procedure $P$ prepares a probability distribution $\mu_{P}(\lambda|\rho)$ in ontic state space $\Lambda$ with $\int _\Lambda \mu_{P}(\lambda|\rho)d\lambda=1$ where $\lambda \in \Lambda$. Given a $\lambda$, the probability of obtaining an outcome $k$ is associated with a response function $\xi_{M}(k|\lambda, E_{k}) $ that satisfies $\sum_{k}\xi_{M}(k|\lambda, E_{k})=1$ where a measurement operator $E_{k}$ is realized through $M$. A valid ontological model must reproduce quantum theory, \textit{i.e.}, $\forall \rho $, $\forall E_{k}$ and $\forall k$, $\int _\Lambda \mu_{P}(\lambda|\rho) \xi_{M}(k|\lambda, E_{k}) d\lambda =\Tr[\rho E_{k}]$ must be satisfied.
 
 The notion of preparation noncontextuality in an ontological model emerges from the following statement \cite{spekkens2005}. Two equivalent experimental procedures in quantum theory are assumed to be equivalently represented in an ontological model.
Then, an ontological model of quantum theory is assumed to be preparation non-contextual if 
\begin{align}
\label{ass}
 \forall M, \  k: \  p(k|P, M)=p(k|P^{\prime},M)	\Rightarrow \mu_{P}(\lambda|\rho)=\mu_{P^{\prime}}(\lambda|\rho)
\end{align}
where $\rho$ is prepared by two distinct preparation procedures $ P $ and $ P^{\prime}$ \cite{spekkens2005,Spekkens2009}. 


\section{A Constrained prepare-measure communication game}\label{PORAC}
We consider the $n-$bit POM task, first introduced by Spekkens \textit{et. al.} \cite{Spekkens2009}. It uses the framework of $n\rightarrow 1$ random access code but there is a fundamental difference. As clearly stated in \cite{Spekkens2009}, in the $n$-bit POM task, the constraint in the form of parity obliviousness on Alice's preparations is imposed rather than constraining the information-carrying capacity of the transmitted system.

In a multiplexing task, Alice receives an input from a uniform random set of $n-$bit strings $x^{\delta}\in\{0, 1\}^{n}$ with $\delta\in \{0,1, 2,\dots,2^{n}-1\}$. For example, $x^0 = 00\cdots 00$, $x^1 =
00\cdots01$, $\dots$, $x^{2^{n}-1}=11\cdots 11$. Upon receiving the input $x^\delta$ Alice uses preparation procedure $P_{x^\delta}$ to prepare the state and sends it to Bob. On the other hand, Bob randomly receives an input $y\in\{1,2,....n\} $ and performs a dichotomic measurement $B_y$ which produces output $b\in\{+1,-1\}$. The winning condition of the game is $b=x_{y}^{\delta}$, \textit{i.e.}, Bob has to correctly guess the $y^{th}$ bit of Alice's string $x^\delta$. However, in a POM task, additional constraint on preparations is imposed, by first defining a parity set \cite{Spekkens2009}
\begin{equation}\label{porac set}
 \mathbb{P}_n = \qty{x^\delta\big|x^\delta\in\{0,1\}^n,\sum_{r}x^\delta_r \geq 2}  
\hspace{0.2cm} \text{with} \hspace{0.2cm} r \in \{1,2,\dots,n\}       
\end{equation}
Now, the parity-oblivious condition is that for any element $s \in \mathbb{P}_n$, the information about $x^\delta.s$  must remain oblivious to Bob, where $x^\delta.s = \oplus_i x^\delta_is_i$. Here, $\oplus$ represents the bitwise XOR operation. 

The parity oblivious condition in an operational theory dictates \cite{Ghorai2018}
\begin{equation} \label{poprobn}
\forall s,y,b \ \sum\limits_{x^{\delta}|x^{\delta}.s=0}p(P_{x^{\delta}}|b,B_y)=\sum\limits_{x^{\delta}|x^{\delta}.s=1}p(P_{x^{\delta}}|b,B_y)
\end{equation}
Using the Bayes rule one can write 
 
\begin{align}
\label{po111}
\forall s,y, b \ \ \sum\limits_{x^{\delta}|x^{\delta}.s=0}p(b|P_{x^{\delta}},B_{y}) = \sum\limits_{x^{\delta}|x^{\delta}.s=1} p(b|P_{x^{\delta}},B_{y}) 
\end{align}
This means that the two preparation procedures $P_{{x}^{\delta}|x^{\delta}.s=0}$ and $P_{{x}^{\delta}|x^{\delta}.s=1}$ cannot be distinguished by any outcome $b$ and any measurement $B_{y}$. Such an operational equivalence falls under the premise of preparation non-contextuality given in Eq. (\ref{ass}). Assuming preparation non-contextuality in an ontological model of the above operational theory we can write  
\begin{align}
\label{pnc}
\forall s, \lambda \ \ \sum\limits_{x^{\delta}|x^{\delta}.s=0}\mu(\lambda|P_{x^{\delta}}) =  \sum\limits_{x^{\delta}|x^{\delta}.s=0} \mu(\lambda|P_{x^{\delta}}) 
\end{align}
where $\lambda \in \Lambda$ is the ontic state and $\Lambda$ is the ontic state space. Using Baye's rule we have 
\begin{align}
\label{pnc}
\forall s, \lambda \ \ \sum\limits_{x^{\delta}|x^{\delta}.s=0}\mu(P_{x^{\delta}}|\lambda) =  \sum\limits_{x^{\delta}|x^{\delta}.s=1} \mu(P_{x^{\delta}}|\lambda) 
\end{align}
which implies that for a preparation noncontextual model, the satisfaction of parity obliviousness condition in an operational theory provides equivalent representation at the level of the ontic state \cite{Spekkens2009, Ghorai2018, Pan2021}. In other words, even knowing the ontic state $\lambda$, the parity information of the inputs remain oblivious. 

The success probability of the POM task can be expressed as
\begin{equation}\label{succprob}
       \mathcal{S}_n = \frac{1}{2^nn} \sum_{y \in \{1,2,3,\dots,n\} \atop x^\delta \in \{0,1\}^n} p(b = x^\delta_y | P_{x^\delta},B_y)   
\end{equation}
In an ontological model, the success probability is
\begin{equation}\label{ontsuc}
       (\mathcal{S}_n)_{C} = \frac{1}{2^n n} \int_{\Lambda}\sum_{y \in \{1,2,3,\dots,n\} \atop x^\delta \in \{0,1\}^n} \mu(\lambda|P_{x^\delta})\xi(b|\lambda,B_y) d\lambda  
\end{equation}
In \cite{Spekkens2009}, the classical bound was derived from information theoretic perspective by proving that the parity-oblivious constraint restrict Alice's communication to be one bit. This gives optimal classical success probability as
\begin{equation}
\label{clprob}
(\mathcal{S}_n)_C \leq \frac{1}{2}\qty(1 + \frac{1}{n})
\end{equation}
However, we continue to consider the preparation noncontextuality as a notion of classicality. We provide an explicit derivation of the success probability in a 
 preparation noncontextual   ontological model for $n=2$ and $n=3$ in Appx.~\ref{copt23}, which matches the bound in Eq. (\ref{clprob}). Following that one can straightforwardly derive Eq. (\ref{clprob}) for any arbitrary $n$.

The quantum strategy involves Alice encoding her $n-$bit string $x^\delta$ into a density matrix $\rho_{x^{\delta}}\in \mathscr{L}(\mathcal{H}^d)$ where $d$ is arbitrary and sending it to Bob. The parity oblivious constraint on the probability in Eq.~(\ref{po111}) translates into a condition on Alice's state preparations
\begin{equation} \label{postaten}
     \forall s \ \sum\limits_{x^{\delta}|x^{\delta}.s=0}\rho_{x^{\delta}}=\sum\limits_{x^{\delta}|x^  {\delta}.s=1}\rho_{x^{\delta}}
\end{equation}
Upon receiving the quantum system, Bob performs a dichotomic measurement $B_y\equiv\{\Pi^b_y\}\in \mathscr{L}(\mathcal{H}^d)$ that produces an outcome $b\in\{+1,-1\}$. The quantum success probability can then be written as \begin{equation} \label{succquant}
        (\mathcal{S}_n)_Q = \frac{1}{2^nn} \sum_{y \in \{1,2,3,\dots,n\} \atop x^\delta \in \{0,1\}^n}  \Tr[\rho_{x^{\delta}} \Pi^b_y]
\end{equation}
We prove that the optimal quantum success probability to be
\begin{equation}
    (\mathcal{S}_n)_Q^{opt} = \frac{1}{2}\qty(1+\frac{1}{\sqrt{n}})
\end{equation}
 without assuming the dimension of the quantum system. This implies that a preparation noncontextual model cannot reproduce optimal quantum success probability. Achieving $(\mathcal{S}_n)_Q^{opt}$ allows us to DI self-test Alice's preparations and Bob's measurements. 

\section{Dimension-independent derivation of optimal quantum success probability} \label{optqn}

We explicitly derive the optimal quantum success probability of the $n-$bit POM task for the cases $n=2$ and $n=3$, without assuming dimension. Further, we extend our proof to any arbitrary $n$.

\subsection{Optimal quantum success probability for $2-$bit POM task} \label{optq2}

In a $2-$bit POM task, Alice randomly receives a string $x^{\delta}\in\{00,01,10,11\}$, upon which she prepares the corresponding quantum state $\rho_{x^\delta}$, and sends them to Bob. Given that the parity set comprises a single element, $\mathbb{P}_2=\{11\}$, we observe that for $x^{\delta}\in \{00,11\}$, the parity condition satisfies $x^\delta.s = 1$, while for $x^{\delta}\in\{01,10\}$, we have $x^\delta.s = 0$. This means that to ensure parity obliviousness, the following condition must hold $\forall y$ 
\begin{equation}\label{po1}
p(b|\rho_{00},B_y) + p(b|\rho_{11},B_y)=p(b|\rho_{01},B_y) + p(b|\rho_{10},B_y) \end{equation}
In quantum theory, this translates into
\begin{equation} \label{postate2}
   s=11: \ \ \  \rho_{x^0}+\rho_{x^3}=\rho_{x^1}+\rho_{x^2}
\end{equation}
Upon receiving the system, Bob randomly performs one of two measurements, $B_1$ or $B_2$. From Eq.~(\ref{succquant}), the quantum success probability $(\mathcal{S}_2)_Q$ is evaluated as
\begin{equation}\label{2bitsuccp}
\begin{aligned}
    (\mathcal{S}_2)_Q &= \frac{1}{8} \sum_{y \in \{1,2\} \atop x^\delta\in\{0,1\}^2 } \Tr[\rho_{x^\delta} \Pi_{B_y}^{x^\delta_y}] \\
&=\frac{1}{2} + \frac{1}{16} \Tr\Bigr[\qty(\rho_{x^0} + \rho_{x^1} - \rho_{x^2} - \rho_{x^3})B_1 \\
    & \ \ \ \ \ \ \ \ \ + \qty(\rho_{x^0} - \rho_{x^1} + \rho_{x^2} - \rho_{x^3})B_2  \Bigr]  
\end{aligned}
\end{equation}
Here we use $\Pi_{B_y}^b = \frac{1}{2}\qty(\openone+(-1)^bB_y)$, by exploiting the fact that $B_y$ is dichotomic with eigenvalues $\pm 1$. Moreover, since there is no restriction on the dimension, an extension of Naimark's dilation theorem \cite{Irfan2020} allows us to consider Bob’s measurement to be a projective measurement and can always represent the measurement as $B_y=2\Pi_{B_y}^b-\openone_d$ with $\Pi_{B_y}^b$ acting in some finite-dimensional Hilbert space $\mathcal{H}$.

At first glance of Eq. (\ref{2bitsuccp}), it may appear that the maximal success probability $(\mathcal{S}_2)_Q=1$ could be attainable in some arbitrary dimension $d$, if $\rho_{x^0}$,  $\rho_{x^1}$,  $\rho_{x^2}$ and $\rho_{x^3}$ are mutual eigenstates of $B_1$ and $B_2$, associated with appropriate eigenvalues. However, such a construction would not fulfil the parity-oblivious constraint in Eq.~(\ref{postate2}).

We consider that Alice’s state preparations in an arbitrary dimension $d$ which comply with the parity-obliviousness condition in Eq.~(\ref{postate2}), take the following form
\begin{equation}\label{rhoai}
\begin{aligned}
    &\rho_{x^0} = \frac{\openone_d + A_0}{d}, \hspace{0.5cm} \rho_{x^3} = \frac{\openone_d + A_3}{d} \hspace{0.5cm} \\
    &\rho_{x^1} = \frac{\openone_d + A_1}{d}, \hspace{0.5cm} \rho_{x^2} = \frac{\openone_d + A_2}{d}
\end{aligned}
\end{equation}
where $A_\delta \in \mathscr{L}(\mathcal{H}^d)$ are arbitrary dichotomic and traceless Hermitian operators defined over the linear space of operators on a $d-$dimensional Hilbert space, satisfying $\Tr[A_\delta]=0$ and $A^2_\delta=\openone_d$. For example, when $d=4$, each $A_\delta$ can be expressed as a linear combination of up to fifteen observables. Substituting relations in Eq.~(\ref{rhoai}) into Eq.~(\ref{2bitsuccp}) and simplifying, we find
\begin{equation} \label{succ1}
    \begin{aligned}
        (\mathcal{S}_2)_Q
            &= \frac{1}{2}+\frac{1}{16d}\Tr[(A_0-A_3)\qty(B_1+B_2)+(A_1-A_2)\qty(B_1-B_2)]
    \end{aligned}  
\end{equation}
Now, we define
\begin{equation} \label{normobs}
\begin{aligned}
        &\mathcal{B}_{0} = \frac{B_1 +B_2}{\omega_0}, \ \ \mathcal{B}_{1} = \frac{B_1 -B_2}{\omega_1} \\
        &\mathcal{A}_{0} = \frac{A_0 -A_3}{\alpha_0}, \ \ \mathcal{A}_{1} = \frac{A_1 -A_2}{\alpha_1}
\end{aligned}
\end{equation}
where $\omega_i$ are operator norms defined as follows
 \begin{equation} \label{onorm}
 \begin{aligned}
     &\omega_i = \norm{B_1+(-1)^i B_2} \ \ \forall i \in \{0,1\} \\
     &\alpha_0 = \norm{A_0-A_3} \ ;\ \alpha_1 = \norm{A_1-A_2}
 \end{aligned}
        \end{equation}
The operator norm is defined as $\norm{O}= \sup\limits_{v \in H, \norm{v}=1} \norm{Ov}$. For Hermitian operators, this corresponds to the spectral radius, i.e. the largest absolute value of its eigenvalues, $\norm{O}=\max \{|\lambda_{min}|,|\lambda_{max}|\}$. Substituting Eq.(\ref{normobs}) into Eq.(\ref{succ1}), the quantum success probability simplifies to
\begin{eqnarray}
(\mathcal{S}_2)_Q &=& \frac{1}{2} + \frac{1}{16d} \Tr[\alpha_0\omega_0 \mathcal{A}_0\mathcal{B}_{0}  + \alpha_1\omega_1 \mathcal{A}_1\mathcal{B}_{1}]
\end{eqnarray}
since $\omega_i$ and $\alpha_i$ are positive real numbers, the success probability is maximised when $\mathcal{A}_i\mathcal{B}_i = \openone_{d}$, which implies that $\mathcal{A}_0 = \mathcal{B}_0$ and $\mathcal{A}_1 = \mathcal{B}_1$.

Given the parity-obliviousness condition $\rho_{x^0}+\rho_{x^3} = \rho_{x^1}+\rho_{x^2}$ which implies $A_0 + A_3 = A_1 + A_2$, squaring both sides of the resultant relation yields $\{A_0,A_3\}= \{A_1,A_2\}$. Hence, the values of $\alpha_k$ is determined as 
\begin{equation}
    \begin{aligned}
        \alpha_0^2 &= \norm{A_0-A_3}^2  = \norm{2\openone-\{A_0,A_3\}}=\alpha_1^2
    \end{aligned}
\end{equation}
Since $\alpha_k>0$, it follows that $\alpha_0=\alpha_1=\alpha$, which results in the optimal success probability
\begin{eqnarray}\label{maxai}
    (\mathcal{S}_2)_Q^{opt}&=& \frac{1}{2} + \frac{1}{16}\qty[\max\limits_{\omega_i,\alpha} \ (\omega_0\alpha +\omega_1\alpha)] \\
    &\leq& \frac{1}{2} + \frac{1}{8\sqrt{2}}\qty[ \max\limits_{\alpha,\omega_i} \alpha\sqrt{\qty(\omega^2_0+\omega^2_1)}]
\end{eqnarray}
This inequality follows from Jensen’s inequality, as well as the Cauchy-Schwarz inequality $\Vec{a}.\Vec{b}\leq |\Vec{a}||\Vec{b}|$, applied to $\Vec{a}=(\omega_0,\omega_1)$ and $\Vec{b}=(1,1)$ \cite{Boyd2004}. Evaluating
\begin{equation}
    \begin{aligned}
        \omega_0^2+\omega_1^2 &= \norm{B_1+B_2}^2+\norm{B_1-B_2}^2\\
        &= \norm{2\openone_d+\{B_1,B_2\}}+\norm{2\openone_d-\{B_1,B_2\}}=4
    \end{aligned}
\end{equation}
This equality holds for any unitary Hermitian operator $B_1$ and $B_2$. The maximum value of $\alpha=\sqrt{\norm{2\openone_d -\{A_0,A_3\}}}$ is attained when $\{A_0,A_3\}=-2\openone_d$, implying $A_0+A_3=A_1+A_2=0$ and $\alpha=2$. Therefore, the optimal quantum success probability becomes
\begin{equation}
    (\mathcal{S}_2)_Q^{opt}=\frac{1}{2}\qty(1+\frac{1}{\sqrt{2}})
\end{equation}
Noting that $\omega_0+\omega_1\leq\sqrt{2(\omega_0^2+\omega_1^2)}$, equality is obtained when $\omega_0=\omega_1$, from which we deduce that $\{B_1,B_2\}=0$, and hence $\omega_0=\omega_1=\sqrt{2}$. The optimal conditions are then 
\begin{equation}
    \mathcal{A}_i=\mathcal{B}_i, \ A_0+A_3=A_1+A_2=0, \ \{B_1,B_2\}=0
\end{equation}
which then leads to
\begin{equation}
    \begin{aligned}
        \frac{A_0-A_3}{\sqrt{2}} &= B_1+B_2\ ; \ \frac{A_1-A_2}{\sqrt{2}}  = B_1-B_2
    \end{aligned}
\end{equation}
Accordingly, Alice's optimal preparations are 
\begin{equation}
    \begin{aligned}
        \rho_{00} &= \frac{1}{d}\qty(\openone_d+\frac{B_1+B_2}{\sqrt{2}})\ ; \ \  \rho_{11} = \frac{1}{d}\qty(\openone_d-\frac{B_1+B_2}{\sqrt{2}}) \\
        \rho_{01} &= \frac{1}{d}\qty(\openone_d+\frac{B_1-B_2}{\sqrt{2}}) \ ; \ \
        \rho_{10} = \frac{1}{d}\qty(\openone_d-\frac{B_1-B_2}{\sqrt{2}})  
    \end{aligned}
\end{equation}
satisfying the parity-oblivious condition $\rho_{00}+\rho_{11} = \rho_{01}+\rho_{10} = 2\openone/d$.

\subsection{Optimal success probability in the $3-$bit POM task}\label{porac3}

In the $3-$bit POM task, Alice prepares one of eight quantum states denoted by $\rho_{x^\delta}$, with $x^\delta\in\{0,1\}^3$ and sends it to Bob. The parity-obliviousness constraint for this task is provided in Eq.~(\ref{poprobn}). In the quantum theory, the constraint in Eq.~(\ref{poprobn}) can be expressed as follows
\begin{eqnarray}\label{postate3}
    s=011&:& \rho_{000} + \rho_{011} + \rho_{100} + \rho_{111} = \rho_{110} + \rho_{101} + \rho_{010} + \rho_{001} \nonumber \\
    s=101&:& \rho_{000} + \rho_{001} + \rho_{110} + \rho_{111} = \rho_{010} + \rho_{100} + \rho_{011} + \rho_{101} \nonumber  \\
    s=110&:& \rho_{000} + \rho_{010} + \rho_{101} + \rho_{111} = \rho_{001} + \rho_{100} + \rho_{011} + \rho_{110} \nonumber  \\
    s=111&:& \rho_{000} + \rho_{011} + \rho_{101} + \rho_{110} = \rho_{001} + \rho_{010} + \rho_{100} + \rho_{111} \nonumber \\
\end{eqnarray}
Upon receiving the state, Bob performs one of three measurements $B_1$, $B_2$ or $B_3$. Using Eq.~(\ref{succquant}) and taking Bob's measurement to be projective, i.e., $\Pi^{x^\delta_y}_{B_y} = \frac{1}{2}\qty(\openone+(-1)^{x^\delta_y}B_y)$, the quantum success probability $(\mathcal{S}_3)_Q$ is given by    
\begin{equation}
 (\mathcal{S}_3)_Q = \frac{1}{2} +  \frac{1}{48}\sum_{y \in \{1,2,3\} \atop x^\delta \in \{0,1\}^3} (-1)^{x^\delta_y}\Tr[\rho_{x^\delta} B_y]
\end{equation}
Following a procedure analogous to that for the $2-$bit POM task, we define Alice's preparations in the form
\begin{equation}\label{neq3rho}
\begin{aligned}
    \rho_{x^{\delta}} = \frac{\openone + A_{\delta}}{d}, \ \ \ \ \delta\in\{0,1,2,\dots,7\}
\end{aligned}
\end{equation}
where $A_\delta \in \mathscr{L}(\mathcal{H}^d)$ are arbitrary dichotomic and traceless Hermitian operators. The quantum success probability can then be rewritten as
\begin{equation}
    \begin{aligned}
        (\mathcal{S}_3)_Q =  \frac{1}{2} &+ \frac{1}{48d} \sum_{k=0}^{3}\Tr[\omega_k\alpha_k\mathcal{A}_k\mathcal{B}_k ]
    \end{aligned}
\end{equation}
Here, the normalised operators $\mathcal{A}_k$ and $\mathcal{B}_k$ are defined as
\begin{equation}
\begin{aligned}
&\mathcal{A}_0 = \frac{A_0-A_7}{\alpha_0}, \ \ \mathcal{B}_{0} = \frac{B_1 + B_2 + B_3}
{\omega_0}\\
&\mathcal{A}_1=\frac{A_1-A_6}{\alpha_1}, \ \ \mathcal{B}_{1} = \frac{B_1 + B_2 -B_3}{\omega_1}, \\ 
&\mathcal{A}_2 = \frac{A_2-A_5}{\alpha_2}, \ \ \mathcal{B}_{2} = \frac{B_1 - B_2 + B_3}{\omega_2} \\
&\mathcal{A}_3 = \frac{A_3-A_4}{\alpha_3}, \ \ \mathcal{B}_{3} = \frac{B_1 - B_2 - B_3}{\omega_3}
\end{aligned}
\end{equation}
with corresponding normalisation factors $\omega_k, \alpha_k>0$, defined in terms of operator norm as
 \begin{equation}
     \begin{aligned}
         \omega_0 &=  \norm{B_1+B_2+B_3}\ ; \ \ \ 
         \omega_1 =  \norm{B_1+B_2-B_3}\\
         \omega_2 &=  \norm{B_1-B_2+B_3}\ ; \ \ \
         \omega_3 = \norm{B_1-B_2-B_3}\\
         \alpha_0 &=  \norm{A_0-A_7}\ ; \ \ \         \alpha_1 =  \norm{A_1-A_6}\\
         \alpha_2 &=  \norm{A_2-A_5}\ ; \ \ \ 
         \alpha_3 =  \norm{A_3-A_4}
     \end{aligned}
 \end{equation}
Using the parity-obliviousness relations in Eq.~(\ref{postate3}), we obtain
\begin{equation}\label{poobs3}
    \begin{aligned}
        &s=011: \ A_0 + A_3 + A_4 + A_7 = A_6 + A_5 + A_2 + A_1 \\
    &s=101: \ A_0 + A_1 + A_6 + A_7 = A_2 + A_4 + A
    _3+ A_5
    \\
    &s=110: \ A_0 + A_2 + A_5 + A_7 = A_1 + A_4 + A_3 + A_6
    \\
    &s=111: \ A_0 + A_3 + A
    _5+ A_6 = A_1 + A_2 + A_4 + A_7  
    \end{aligned}
\end{equation}
From this set of constraints given by Eq.~(\ref{poobs3}), it follows that (see Appx.~\ref{apnA})
\begin{equation} \label{acpo3}
    \{A_0,A_7\}=\{A_1,A_6\}=\{A_2,A_5\}=\{A_3,A_4\}
\end{equation}
Hence, $\alpha_i^2 = \alpha_j^2 \ \forall i \neq j\in\{0,1,2,3\}$.  Since $\alpha_i>0$, we have $\alpha_i = \alpha_j = \alpha$. To maximise the quantum success probability, we require $\mathcal{A}_k\mathcal{B}_{k} =\openone_d \ \forall k$. Applying the Cauchy–Schwarz inequality $\Vec{a}.\Vec{b}\leq |\Vec{a}||\Vec{b}|$ to $\Vec{a}=(\omega_0,\omega_1,\omega_2,\omega_3)$ and $\Vec{b}=(1,1,1,1)$, we find that the optimal quantum success probability satisfies
\begin{equation}
\begin{aligned}
   \qty(\mathcal{S}_3)^{opt}_Q \leq  \frac{1}{2} + \frac{1}{24} \qty[\max_{\alpha, \omega_i}\alpha\sqrt{\omega_0^2 + \omega_1^2 + \omega_2^2 + \omega_3^2}]
\end{aligned}
\end{equation}
Now, evaluating $\sum_{i=0}^3\omega_i^2$, we obtain
\begin{equation}
    \begin{aligned}
        \omega_0^2&=\norm{3 \openone + \{B_1,B_2\}+\{B_2,B_3\}+\{B_2,B_3\}} \\
        \omega_1^2&=\norm{3 \openone + \{B_1,B_2\}-\{B_2,B_3\}-\{B_1,B_3\}}\\
        \omega_2^2&=\norm{3 \openone - \{B_1,B_2\}-\{B_2,B_3\}+\{B_1,B_3\}} \\
        \omega_3^2&=\norm{3 \openone - \{B_1,B_2\}+\{B_2,B_3\}-\{B_1,B_3\}}
    \end{aligned}
\end{equation}
Equality in the Cauchy–Schwarz bound, holds if, $\omega_0=\omega_1=\omega_2=\omega_3$, which implies that $\{B_y,B_{y'}\} = 0 \ \forall y \neq y' \in[3]$. In this scenario, $\omega_i = \sqrt{3}$, and thus $\sum_{i=0}^3\omega_i^2 = 12$. The maximum value of $\alpha = \sqrt{\norm{2\openone_d - \{A_0,A_7\}}}$ is achieved when $\{A_0,A_7\}=-2\openone_d$,  which satisfies the parity-obliviousness constraints of Eq.~(\ref{poobs3}), with both sides evaluating to zero. Consequently, $\alpha = 2$, and the optimal quantum success probability becomes
\begin{equation}\label{optsucc3to1}
(\mathcal{S}_3)_Q^{opt} = \frac{1}{2}\Biggl(1 + \frac{1}{\sqrt{3}}\Biggl)
\end{equation}
The optimality condition $\{A_0,A_7\} = -2\openone\implies A_0=-A_7$, and from Eq.~(\ref{acpo3}), we set $A_1=-A_6$, $A_2=-A_5$, $A_3=-A_4$. This finally leads to the following preparations of Alice
\begin{equation}
    \begin{aligned}
        \rho_{000} &= \frac{1}{d}\qty(\openone_d+\frac{B_1+B_2+B_3}{\sqrt{3}}), \\
        \rho_{001} &= \frac{1}{d}\qty(\openone_d+\frac{B_1+B_2-B_3}{\sqrt{3}})\\
        \rho_{010} &= \frac{1}{d}\qty(\openone_d+\frac{B_1-B_2+B_3}{\sqrt{3}})\\
        \rho_{011} &= \frac{1}{d}\qty(\openone_d+\frac{B_1-B_2-B_3}{\sqrt{3}})
    \end{aligned}
\end{equation}
The remaining states (e.g., $\rho_{111}$) can be obtained by exploiting their orthogonality with the corresponding complements (e.g.,$\rho_{000}$), ensuring the overall state ensemble conforms to the parity-oblivious condition in Eq. (\ref{postate3}).

\subsection{Optimal success probability in the  $n-$bit POM task} \label{optqnc}

For $n-$bit POM task, Alice prepares $2^n$ distinct states, denoted by $\rho_{x^{\delta}}$, corresponding to $n-$bit string $x^{\delta}\in \{0,1\}^n$.  The parity set is defined in Eq.~(\ref{porac set}) and the parity oblivious conditions can then be derived as follows. 
Considering each state is of the form $\rho_{x^{\delta}}=\frac{1}{d}\qty(\openone+A_\delta)$, with $A_\delta$ being traceless, dichotomic Hermitian operators, the condition follows from Eq.~(\ref{postaten}) becomes \cite{Ghorai2018}
\begin{equation}\label{ncond}
    \forall s \in \mathbb{P}_n \  \  \sum_{x^{\delta}}(-1)^{x^{\delta}\cdot s} A_{\delta} =0
\end{equation}
 The quantum success probability is given by
\begin{equation}
\begin{aligned}
    (\mathcal{S}_n)_Q =\frac{1}{2} + \frac{1}{2^{n+1}n}\sum_{y \in [n] \atop x^\delta\in\{0,1\}^n } (-1)^{x^\delta_y}\Tr[\rho_{x^\delta}B_y]
\end{aligned}
\end{equation}
which needs to be calculated by satisfying Eq.~(\ref{ncond}). Here we again use $\Pi_{B_y}^{x^\delta_y}=\frac{1}{2}(\openone+(-1)^{x^\delta_y}B_y)$. Considering each quantum state takes the form $\rho_{x^\delta} = \frac{1}{d}\qty(\openone_d + A_\delta)$, where $A_\delta\in\mathscr{L}(\mathcal{H}^d)$ are traceless, dichotomic hermitian operators, the expression for the quantum success probability simplifies to
\begin{equation}\label{n-scenario}
\begin{aligned}
     (\mathcal{S}_n)_Q &= \frac{1}{2} + \frac{1}{2^{n+1}nd} \Tr\qty[\sum_{\delta=0}^{2^{n-1}-1}\alpha_\delta\omega_\delta\mathcal{A}_\delta\mathcal{B}_\delta] 
\end{aligned}
\end{equation}
The operators $\mathcal{A}_\delta$, $\mathcal{B}_\delta$ and the corresponding normalisation factors are defined as
\begin{equation}
\begin{aligned}
&\mathcal{B}_\delta = \frac{\sum_{y = 1}^n (-1)^{x_y} B_y}{\omega_\delta} \ ; \ \ \ \mathcal{A}_\delta = \frac{A_\delta+A_{\overline{\delta}}}{\alpha_\delta} \\
&\omega_\delta = \norm{\qty(\sum_{y = 1}^n (-1)^{x^\delta_y} B_y)}\ ; \ \alpha_\delta = \norm{\qty(A_\delta-A_{\overline{\delta}})}
\end{aligned}
\end{equation}
Here $\overline{\delta}$ is the decimal representation of the bitwise complement of $x^\delta \equiv x_1x_2\dots x_n$. By considering the full set of parity-oblivious constraints, we show that (see Appx.~\ref{apnB})
\begin{equation} \label{acpon}
    \{A_\delta,A_{\overline{\delta}}\}=\{A_\eta ,A_{\overline{\eta}}\} \ ; \forall \delta \neq \eta
\end{equation}
This condition ensures $\alpha_\delta = \alpha_\eta = \alpha \ \forall \delta \neq \eta$. To maximise the quantum success probability, it suffices that $\mathcal{A}_k\mathcal{B}_{k} =\openone_d \ \forall k$. 
This leads to the optimal bound
\begin{equation}
\begin{aligned}
     (\mathcal{S}_n)_Q^{opt} &= \frac{1}{2} + \frac{1}{2^{n+1}n} \qty(\max_{\alpha, \omega_i}\sum_{\delta=0}^{2^{n-1}-1}\alpha\omega_\delta) \\ 
     &\leq\frac{1}{2} + \frac{1}{2^{n+1}n} \max_{\alpha, \omega_i}\qty(\alpha\sqrt{2^{n-1}\sum_{i=0}^{2^{n-1}-1} \omega_i^2 })
\end{aligned}
\end{equation}
The inequality in the second line follows from the Cauchy–Schwarz inequality applied to the vectors $\Vec{a}=(\omega_0,\omega_1,\dots, \omega_{2^{n-1}-1})$ and $\Vec{b}=(1,1,\dots,1)$. Equality is attained \textit{iff} $\omega_i=\omega_j \ \forall i \neq j$, which is the case precisely when all the observables $\{B_y,B_{y'}\} = 0 \ \forall y \neq y'$ (see Appx.~\ref{apnC}). Under these conditions, each $\omega_i = \sqrt{n}$, giving $\sum_{\delta}\omega_\delta^2 = n.2^{n-1}$. Additionally, the maximal value of $\alpha = \sqrt{\norm{2\openone_d - \{A_\delta,A_{\overline{\delta}}\}}}=2$, attained when $\{A_\delta,A_{\overline{\delta}}\}=-2\openone_d$, i.e., $A_\delta = -A_{\overline{\delta}}$. 

Under these optimal conditions, the parity-oblivious constraints in Eq.~(\ref{postaten}) are satisfied, and the optimal quantum success probability becomes
\begin{equation}
    \begin{aligned}
        (\mathcal{S}_n)_Q^{opt} = \frac{1}{2}\qty(1+\frac{1}{\sqrt{n}})
    \end{aligned}
\end{equation}
This clearly outperforms the classical optimal success probability $(\mathcal{S}_n)^{opt}_C = \frac{1}{2} \left(1 + \frac{1}{n} \right)$

These optimality conditions further lead to the following self-testing implications - (i) The quantum states encoding binary strings that are bitwise complements of each other are pairwise orthogonal. (ii) The observables $B_1$, $B_{2}$, \dots, $B_{n}$ are mutually anticommuting. Additionally, one can construct Alice's preparation as follows 
\begin{equation}\label{astate}
    \rho_{x^\delta} = \frac{1}{d}\qty(\openone_d+\frac{\sum_{y = 1}^n (-1)^{x^\delta_y} B_y}{\sqrt{n}} )
\end{equation}
Thus, we conclude that in any generalised $n-$bit POM task, where Alice prepares one of $2^n$ quantum states and Bob performs one of $n$ dichotomic measurements, the maximum achievable success probability exceeds the preparation non-contextual bound. This maximal quantum violation provides a DI self-testing for both the prepared quantum states and the measurement observables involved.


\section{DI Self-testing in prepare-measure scenario}\label{stspm}

In the preceding Sec.~\ref{optqn}, we establish that to obtain the optimal quantum advantage of the $n-$bit POM task, following conditions must be satisfied.
\begin{enumerate}[(i)]
    \item Bob's $n$ observables, $B_y \in \mathscr{L}(\mathcal{H}^d)$, must mutually anticommute, \textit{i.e.}, for $2\leq y,y'\leq n$, we require $\{B_y,B_{y'}\}=0$. This condition constrains the minimal Hilbert space dimension. While no general result is known for the maximum number of mutually anticommuting observables in an arbitrary $d$-dimensional space, it is known that for $d=2^m$, there exist at most $2m+1$ mutually anticommuting traceless Hermitian operators with eigenvalues $\pm 1$ \cite{Bandyopadhyay2002}, a result that follows from the structure of Clifford algebras and their minimal faithful representations \cite{Wehner2010, Slofstra2011}. Consequently, Bob's measurement operators must act on a Hilbert space of minimal dimension $d^{*}=2^m$, where $m=\lceil \frac{1}{2}(n-1) \rceil$ \label{i}.
    
    \item Alice must prepare quantum states of at least the same dimension, $d^{*}=2^m$, with $m=\lceil \frac{1}{2}(n-1) \rceil$. If the prepared states lie in a smaller-dimensional Hilbert space, then regardless of Bob’s measurements, the success probability cannot attain its optimal value \label{ii}.

    \item \label{iii}Alice's prepared quantum states $\rho_{x^{\delta}}\in\mathscr{L}(\mathcal{H}^d)$ given by Eq.~(\ref{astate}) result in a family of states exhibiting directional bias that lie entirely within the Clifford subspace spanned by $\{\openone, B_1,\dots,B_n\}$. Each $\rho_{x^{\delta}}$ corresponds to one of the $2^n$ vertices of an $n$-dimensional regular hypercube of edge length $\frac{2}{\sqrt{n}}$, embedded in the Clifford subspace of the generalised Bloch sphere associated with $\mathcal{H}^d$. These vertices lie on the surface of a unit-radius $n$-dimensional sphere, and their coordinates are given by the Clifford Bloch vectors (see Appx.~\ref{Appxcbghs})
    \begin{equation}\label{genbv}
        \Vec{r}_{x^{\delta}}=\frac{1}{\sqrt{n}}\qty((-1)^{x_1^{\delta}},\dots,(-1)^{x_n^{\delta}}) \ \forall x^{\delta}\in(0,1)^n
    \end{equation}
\end{enumerate}
It is important to note that conditions (\ref{i})-(\ref{iii}) are purely mathematical statements concerning optimality. To elevate these to formal self-testing results, we must establish the existence of an unitary that maps the physical experiment to a reference experiment \cite{saha2020, Sarkar2021}. 

We begin by assuming that Bob's measurements $B_y \in\mathscr{L}(\mathcal{H}^d)$ and Alice's preparations $\rho_{x^\delta}$ are elements of the same Hilbert space, and together they generate the correlations $\qty{p(b|y,\rho_{x^{\delta}})=\Tr[\rho_{x^{\delta}}\Pi^b_y]}$, yielding the optimal quantum success probability for the $n$-bit POM task. This constitutes the \textit{physical experiment}. 

Now, consider a \textit{reference experiment}, in which Bob's measurements $\{{B}^\prime_y \in \mathscr{L}(\mathcal{H}^{d^\prime})\}$ and Alice's prepared states $\{\rho^{\prime}_{x^\delta}\in   \mathscr{L}(\mathcal{H}^{d^\prime})\}$ attain the same optimal quantum success probability, with $\mathcal{H}^{d^{\prime}}$ being a Hilbert space of known finite dimension. In our construction, we take $\mathcal{H}^{d^{\prime}}\equiv (\mathcal{H}^2)^{\otimes m}$ for a suitably chosen $m$, depending on the values of $n$ as $m=\lceil \frac{1}{2}(n-1) \rceil$. Attainment of the optimal quantum success probability implies that the physical and reference experiments are equivalent, provided that there exists a unitary establishing the following correspondence
\begin{equation}\label{sts}
\begin{aligned}
     \exists U : \mathcal{H}^d \to \mathcal{H}^{d^\prime}\otimes \mathcal{H}^{J} \ \text{s.t.} \ & (i) \ U B_yU^\dagger = B^{\prime}_y\otimes\openone_{J} \\
     & (ii) \  U\rho_{x^\delta}U^\dagger = \rho^\prime_{x^\delta}\otimes\frac{\openone_{J}}{J} 
\end{aligned}
\end{equation}
where $ B_{y} \in  \mathscr{L}(\mathcal{H}^d)$, ${B}^{\prime}_y=\boldsymbol{{\sigma}}_i \in  \mathscr{L}(\mathcal{H}^{d^{\prime}})$ with $\boldsymbol{\sigma}_i$ being Pauli operators acting on multiple qubits, and $\openone_{J}\in\mathscr{L}(\mathcal{H}^{J})$, satisfying $d = d^{\prime}J$. The reference state $\rho^\prime_{x^\delta}$ is given by
\begin{equation}
    \rho^\prime_{x^\delta} = \frac{1}{d^\prime}\qty(\openone_{d^\prime}+\frac{\sum_{y = 1}^n (-1)^{x^\delta_y} B^\prime_y}{\sqrt{n}} )
\end{equation}
Whenever such a unitary $U$ can be constructed (or its existence established), it implies that the reference experiment is self-tested by the observed correlations of the physical experiment. This leads to the following formal statement.
\begin{thm}
Let a quantum strategy $\qty{\rho_{x^\delta}, {B_y\in  \mathscr{L}(\mathcal{H}^d)}}$, achieve maximum quantum success probability in the $n-$bit POM task, where $\mathcal{H}^d$ is an unknown finite-dimensional Hilbert space. Then, this strategy self-tests the reference preparations and measurements $\qty{\rho^\prime_{x^\delta}, B^\prime_y\in  \mathscr{L}(\mathcal{H}^{d^\prime})}$, upto unitary freedom and complex conjugation, where $\mathcal{H}^{d^\prime}$ is of known dimension, if there exists a unitary operation $U: \mathcal{H}^d \to \mathcal{H}^{d^\prime}$ such that Eq.~(\ref{sts}) satisfied.
\end{thm}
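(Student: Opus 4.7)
The plan is to construct the intertwining unitary $U$ explicitly from the Clifford-algebra structure forced by optimality, and then to verify that it satisfies both identities of Eq.~(\ref{sts}); equivalence of the correlations then follows by cyclicity of the trace.

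First, I would invoke the optimality conditions established in Sec.~\ref{optqn}: any strategy attaining $(\mathcal{S}_n)_Q^{opt}$ must have $B_y^2=\openone_d$, $\Tr[B_y]=0$, and $\{B_y,B_{y'}\}=0$ for $y\neq y'$. The family $\{B_y\}_{y=1}^n$ therefore constitutes a finite-dimensional $\ast$-representation of the generators of the real Clifford algebra $\mathcal{C}\ell(n,0)$. By the classification of such representations \cite{Bandyopadhyay2002,Slofstra2011}, every one is unitarily equivalent to a direct sum of copies of the unique (up to unitary equivalence and, for odd $n$, up to complex conjugation) irreducible representation, of dimension $d^{\prime}=2^m$ with $m=\lceil(n-1)/2\rceil$. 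A canonical realisation of the irreducible generators $B^{\prime}_y$ is furnished by Jordan--Wigner-type Pauli strings on $m$ qubits.

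Next, I would package this decomposition into the desired unitary. The classification supplies a single unitary $U:\mathcal{H}^d\to\mathcal{H}^{d^{\prime}}\otimes\mathcal{H}^{J}$, with $J=d/d^{\prime}$, such that $U B_y U^{\dagger}=B^{\prime}_y\otimes\openone_{J}$ simultaneously for every $y\in\{1,\dots,n\}$; this is condition (i) of Eq.~(\ref{sts}). Using the closed-form optimal states of Eq.~(\ref{astate}), $\rho_{x^\delta}=\frac{1}{d}\qty(\openone_d+\frac{1}{\sqrt{n}}\sum_{y=1}^n(-1)^{x^\delta_y}B_y)$, and applying $U\cdot U^{\dagger}$ together with $U\openone_d U^{\dagger}=\openone_{d^{\prime}}\otimes\openone_J$ yields
\begin{equation}
U\rho_{x^\delta}U^{\dagger}=\frac{1}{d^{\prime}}\qty(\openone_{d^{\prime}}+\frac{1}{\sqrt{n}}\sum_{y=1}^n(-1)^{x^\delta_y}B^{\prime}_y)\otimes\frac{\openone_J}{J}=\rho^{\prime}_{x^\delta}\otimes\frac{\openone_J}{J},
\end{equation}
establishing condition (ii). Equality of the input-output statistics $\Tr[\rho_{x^\delta}\Pi^b_y]=\Tr[\rho^{\prime}_{x^\delta}\Pi^{\prime b}_y]$ then follows immediately from cyclicity of the trace and the induced relation $U\Pi^b_y U^{\dagger}=\Pi^{\prime b}_y\otimes\openone_J$ on the spectral projectors.

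The hard part, I expect, is the representation-theoretic step: rigorously producing a \emph{single} unitary $U$ that simultaneously block-decomposes all $n$ mutually anticommuting observables into the tensor form $B^{\prime}_y\otimes\openone_J$, and carefully accounting for the odd-$n$ case where $\mathcal{C}\ell(n,0)$ admits two inequivalent irreducible representations, so that the ``complex conjugation'' clause in the theorem precisely absorbs the possible mixed appearance of the two irreps among the summands. Once $U$ has been produced, the verification for the preparations and the correlations is essentially algebraic bookkeeping, because the optimal states are low-degree polynomials in the $B_y$.
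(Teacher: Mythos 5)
Your route is genuinely different from the paper's. The paper proves the theorem constructively, by recursively building the unitary $U$: $U_1$ diagonalises $B_1$ to $\sigma_z\otimes\openone_J$; anticommutation with $B_1$ forces the other observables into off-diagonal block form; a second unitary $U_2$, built from the off-diagonal block of $\mathtt{B}_2$, rotates $\mathtt{B}_2$ to a fixed Pauli and recasts the remaining $\mathtt{B}_y$ as $\sigma_x\otimes\mathcal{B}^{(1)}_y$ with $\{\mathcal{B}^{(1)}_y\}$ a smaller anticommuting family, and the recursion iterates to termination after $m=\lceil(n-1)/2\rceil$ steps (Sec.~\ref{st3bit}, Appx.~\ref{appx2}). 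You instead invoke the structure theory of Clifford-algebra representations to assert that any finite-dimensional $\ast$-representation of $n$ anticommuting Hermitian involutions is unitarily equivalent to a direct sum of copies of the irreducible Pauli-string representation, and read off the tensor form $B'_y\otimes\openone_J$; the verification for the preparations and the correlations is then, as you say, algebraic bookkeeping. The trade-off: your route is structurally cleaner and identifies the essential algebraic input up front, but it moves precisely the content of the theorem --- the production of $U$ --- into a citation, and you correctly flag this as ``the hard part.'' The paper's recursion \emph{is} that citation unpacked, and the explicitness is not incidental to a self-testing claim: one needs an operational map from physical to reference data, which the recursive $U=\qty(\openone_2\otimes V^{(m)}_2V^{(m)}_1)\cdots U_2U_1$ supplies while an abstract existence statement does not. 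One genuine credit to your sketch: you identify the odd-$n$ subtlety that the two inequivalent irreducible representations of $\mathcal{C}l_n(\mathbb{C})$ may appear mixed among the summands, and that the ``complex conjugation'' clause is what absorbs this. The paper's explicit $n=3$ construction in fact glosses over exactly this point --- it implicitly takes $-\iota X_2 X_3^\dagger=\openone_J$ rather than a general Hermitian involution, which is precisely where the two-irrep ambiguity enters --- so your caveat is the sharper one. But as written your proposal still rests on the cited classification rather than deriving it; to make it stand alone you would have to prove that classification constructively, at which point you essentially reconstruct the paper's recursion.
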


\begin{proof}
We prove this by explicitly constructing such unitaries for the cases of the $3-$bit and $5-$bit POM tasks in the following sections. From these base cases, the result for general $n-$bit POM tasks follows recursively.
\end{proof}

\subsection{Construction of unitary for $3-$bit POM task} \label{st3bit}

We begin directly with the $3-$bit POM task, since the argument for the $2-$bit case will follow analogously. Note that Bob's three observables are dichotomic, traceless operators satisfying $B_y^2 = \openone_{d}$. Therefore, there always exists a basis in which $B_1$ is diagonal, with all positive and negative eigenvalues grouped together. Consequently, there exists a unitary operator $U_1\in \mathscr{L}(\mathcal{H}^d)$ such that  
\begin{equation}
    \begin{aligned}
       \mathtt{B}_1= U_1B_1U_1^{\dag} = \sigma_z\otimes\openone_{J}
    \end{aligned}
\end{equation}
where $\sigma_z$ and $\openone_{J}$ are acting on $\mathcal{H}^2$ and $\mathcal{H}^{J}$, respectively, with $\mathcal{H}^d=\mathcal{H}^2 \otimes \mathcal{H}^{J}$. Without loss of generality, we express the rotated observable $\mathtt{B}_2$ in block form as
\begin{equation}
    \mathtt{B}_2 = U_1B_2U_1^\dag= \begin{bmatrix}
            \eta_{00} & \eta_{01} \\
            \eta_{10} & \eta_{11}
        \end{bmatrix}
\end{equation}
where $\eta_{ij} \in \mathscr{L}(\mathcal{H}^J)$  with $J = d/2$. Using $\{B_1,B_2\}=0 \implies\{\mathtt{B}_1,\mathtt{B}_2\} = 0$, we obtain $\mathtt{B}_2 = -\mathtt{B}_1\mathtt{B}_2\mathtt{B}_1$. Solving this yields $\eta_{00}=\eta_{11}=0$. Furthermore, the Hermiticity of $\mathtt{B}_2$ implies  $\eta_{10}=\eta_{01}^\dagger$. Now as we can see that this leads to an off diagonal form of $B_2$. For notational convinience we introduce a variable $X_2 = \eta_{10}$. Similarly, using the Hermiticity of $\mathtt{B}_3$ and its anticommutation with $\mathtt{B}_1$, we have
\begin{equation}
    \mathtt{B}_2 = \begin{bmatrix}
            0 & X_2 \\
            X_2^\dagger & 0
        \end{bmatrix}; \  \ \ \ \ \ 
        \mathtt{B}_3 = \begin{bmatrix}
            0 & X_3 \\
            X_3^\dagger & 0
        \end{bmatrix}
\end{equation}
where, $X_3 \in \mathscr{L}(\mathcal{H}^J)$. Now consider the anticommutation relation $\{\mathtt{B}_2,\mathtt{B}_3\} =0$. This leads to the following identities
\begin{equation}
    X_3 = - X_2X_3^\dagger X_2; \ \ X_3^\dagger = - X_2^\dagger X_3X_2^\dagger
\end{equation}
Also, since $B_3^2=B_2^2=\openone_d$, it follows that $X_3^\dagger X_2X_3^\dagger X_2 = \iota^2\openone_J$. Defining the unitary operator
\begin{equation}
    U_2 = \begin{bmatrix}
        \openone_{J} & 0\\
        0 & \iota X_2
    \end{bmatrix}
\end{equation}
Note that $U_2\qty(\sigma_z\otimes\openone_{J})U_2^\dagger = \sigma_z\otimes\openone_{J}$. Applying $U_2$ to $\mathtt{B}_2$, we obtain 
\begin{equation}
    U_2\mathtt{B}_2U_2^\dagger = \sigma_x\otimes\openone_{J}
\end{equation}
and, for $\mathtt{B}_3$, we get
\begin{equation}
    U_2\mathtt{B}_3U_2^\dagger    = \begin{bmatrix}
            0 & X_3X_2^\dagger\\
            X_2X_3^\dagger & 0
        \end{bmatrix}
\end{equation}
From the relations derived earlier, we have $X_2X_3^\dagger = \iota\openone_{J}$ and hence $X_3X_2^\dagger = -\iota \openone_{J}$. Therefore, $U_2\mathtt{B}_3U^\dagger_2=\sigma_y\otimes\openone_{J}$.
Thus, there exists $U:=U_2U_1$ such that
\begin{equation}
    UB_1U^\dag = \sigma_z\otimes\openone_{J}, \ UB_2U^\dag = \sigma_x\otimes\openone_{J}, \ UB_3U^\dag = \sigma_y\otimes\openone_{J} 
\end{equation}
Finally, to attain the optimal quantum success probability in the $3-$bit POM task, Alice must prepare and send the following ensemble of states
\begin{equation}
    \rho_{x^\delta_1x^\delta_2x^\delta_3} = \frac{1}{2}\qty(\openone_2 + \frac{(-1)^{x^\delta_1}\sigma_x+(-1)^{x^\delta_2}\sigma_y+(-1)^{x^\delta_3}\sigma_z}{\sqrt{3}})\otimes\frac{\openone_{J}}{J}
\end{equation}

This proves the existence of unitary which leads to a map from the physical experiment to the reference experiment.


\subsection{Construction of unitary for $5-$bit POM task}

The condition for achieving the optimal success probability in the $5-$bit POM task requires that the observables on Bob's side mutually anticommute, i.e., $\{B_y,B_{y'}\} = 0, \forall y \neq y'\in \{1,2,3,4,5\}$. This implies the existence of five mutually anticommuting observables, which in turn necessitates that Bob's Hilbert space must have dimension at least $2^m$, where $m=\lceil (n-1)/2 \rceil$. For $n=4$ or $5$, this gives $m=2$, resulting in minimum Hilbert space dimension of $d=4J$ for some $J\geq 1$. Accordingly, in the physical experiment, we consider an extended Hilbert space formed via a tensor product of $\mathcal{H}^4$ and $\mathcal{H}^J$ to support such observables. Without loss of generality, we assume that the first observable is diagonalised in some basis, via a suitable unitary $U_1: \mathcal{H}^d\to\mathcal{H}^4\otimes\mathcal{H}^J$ such that
\begin{equation} \label{b13}
    \mathtt{B}_1 = U_1B_1U_1^\dagger =\qty( \sigma_z\otimes\openone_2)\otimes\openone_{J}
\end{equation}
where $\sigma_z, \openone_2 $ are operators acting on $\mathcal{H}^2$. Using the anticommutation relations $\{B_1,B_y\}=0 \forall y \in \{2,3,4,5\}$ and applying $\mathtt{B}_y = U_1B_yU_1$, the remaining observables can be expressed in the same basis as 
\begin{equation}
    \begin{aligned}
    \mathtt{B}_2 &= \begin{bmatrix}
            0 & X_2\\
            X_2^\dagger &0
        \end{bmatrix} \ ; \ 
        \mathtt{B}_3 = \begin{bmatrix}
            0 & X_3\\
            X_3^\dagger &0
        \end{bmatrix} \ ; \\
        \mathtt{B}_4 &= \begin{bmatrix}
            0 & X_4\\
            X_4^\dagger &0
        \end{bmatrix} \ ; \
        \mathtt{B}_5 = \begin{bmatrix}
            0 & X_5\\
            X_5^\dagger &0
        \end{bmatrix} \ ;
    \end{aligned}
\end{equation}
where each $X_y$ is a $2J \times 2J$ matrix, with $J=d/4$. Now, consider the following unitary operator $U_2:\mathcal{H}^d \to \mathcal{H}^d$, defined by
\begin{equation}
    U_2 = \begin{bmatrix}
        \openone_{2J} & 0\\
        0 & \iota X_2
    \end{bmatrix}
\end{equation}
It is straightforward to verify that $U_2$ leaves $B_1$ invariant, \textit{i.e.}, $U_2\qty( \sigma_z\otimes\openone_2\otimes\openone_{J})U_2^\dagger = \sigma_z\otimes\openone_2\otimes\openone_{J}$. Applying $U_2$ to $\mathtt{B}_2$ yields
\begin{equation}\label{b23}
     \begin{aligned}
        U_2\mathtt{B}_2U_2^\dag &=\begin{bmatrix}
            0 & -\iota\openone_{J}\\
            \iota\openone_{J}^\dagger & 0
        \end{bmatrix} = \qty(\sigma_y\otimes\openone_2)\otimes\openone_{J}
    \end{aligned}
\end{equation}
From the anticommutation condition $\{\mathtt{B}_2,\mathtt{B}_y\} =0$ for $y\geq3$, it follows that $X_2 = -X_yX_2^\dagger X_y$. Applying $U_2$ to $B_y$ for $y\geq3$ then yields
\begin{equation}
        U_2\mathtt{B}_yU_2^\dag = \sigma_x\otimes\qty(-\iota X_yX_2^\dagger) \ \forall y \in \{3,4,5\}
\end{equation}
Note that we already obtain explicit forms for $B^\prime_1$ and $B^\prime_2$ in the known dimension $d'=4$, from Eqs.~(\ref{b13}) and (\ref{b23}). We now define a new set of Hermitian operators acting on $\mathcal{H}^{2J}$ as $\mathcal{B}_y := (-\iota X_y X_2^\dagger)$ for $y\geq 3$. From the mutual anticommutation $\{\mathtt{B}_y,\mathtt{B}_{y^\prime}\}=\delta_{y y'}\openone_{2J}$ implies $\{\mathcal{B}_y,\mathcal{B}_{y^\prime}\}=\delta_{y y'}\openone_{2J}$ for $y,y'\in\{3,4,5\}$. Using the standard construction for three mutually anticommuting observables (as previously discussed in Sec.~\ref{st3bit}), there exists a unitary $U_3:\mathcal{H}^{2J}\to \mathcal{H}^2\otimes\mathcal{H}^J$ such that 
\begin{equation}
\begin{aligned}
        U_3\mathcal{B}_3U_3^\dag&=U_3\qty(-iX_3X_2^\dag )U_3^\dag = \sigma_z\otimes\openone_{J}\\
        U_3\mathcal{B}_4U_3^\dag&=U_3\qty(-iX_4X_2^\dag )U_3^\dag = \sigma_y\otimes\openone_{J}\\
        U_3\mathcal{B}_5U_3^\dag&=U_3\qty(-iX_5X_2^\dag )U_3^\dag = \sigma_x\otimes\openone_{J}
\end{aligned}
\end{equation}
Thus, we define the overall unitary transformation as
\begin{equation}
  \exists U:\mathcal{H}^d\to \mathcal{H}^4\otimes\mathcal{H}^J \  \text{s.t.} \ U = \qty(\openone_2\otimes U_3)U_2U_1
\end{equation}
which maps the observables $\{B_y\}$ as follows
\begin{equation}
    \begin{aligned}
        UB_1U^\dag & = \qty(\sigma_z\otimes\openone_2)\otimes\openone_{J}; \ \  UB_2U^\dag  = \qty(\sigma_y\otimes\openone_2)\otimes\openone_{J}\\
        UB_3U^\dag & = \qty(\sigma_x\otimes\sigma_z)\otimes\openone_{J}; \ \ 
        UB_4U^\dag  = \qty(\sigma_x\otimes\sigma_y)\otimes\openone_{J}\\
        UB_5U^\dag &= \qty(\sigma_x\otimes\sigma_x)\otimes\openone_{J}\\
    \end{aligned}
\end{equation}
The state transforms accordingly as
\begin{equation*}
    U\qty(\rho_{x^\delta_1x^\delta_2x^\delta_3x^\delta_4x^\delta_5})U^\dag = \frac{1}{4}\qty(\openone_4 +\frac{ \sum_{y=1}^5(-1)^{x^\delta_y}B^\prime_y}{\sqrt{5}})\otimes\frac{\openone_{J}}{J}
\end{equation*}
This construction reveals a recursive structure in obtaining optimal observables through successive local unitary transformations. The case of the $4$-bit POM task arises as a special instance of this procedure. The generalisation to arbitrary $n-$bit tasks follows analogously and is presented in Appx.~\ref{appx2}.


\section{Summary and Discussion} \label{conclu}

In sum, we have demonstrated a fully DI self-testing protocol within a constrained prepare-measure scenario devoid of assuming the dimension of the quantum system, a feat which was believed to be unachievable. In order to showcase this, we considered a well-known prepare-measure communication game, known as the $n$-bit POM task \cite{Spekkens2009}, wherein a parity obliviousness constraint is imposed on the preparations. This constraint plays a crucial role in facilitating a dimension-independent derivation of the optimal quantum bound of the success probability which is greater than that of achievable by a preparation noncontextual ontological model. 

We explicitly proved DI self-testing of both the preparation and measurement devices for the cases of $n=2$ and $n=3-$bit POM tasks. Subsequently, we generalised our argument to arbitrary $n$. Our results show that achieving the optimal quantum value in the $n-$bit POM task warrants that Bob's measurements must correspond to $n$ mutually anticommuting observables. This requirement in turn imposes a lower bound on the Hilbert space dimension of the involved quantum system, constraining it to $d=2^m$, where $m=\lceil \frac{1}{2}(n-1)\rceil$. 

We further characterised the structure of Alice's optimal preparations, which correspond to the $2^n$ vertices of an $n-$dimensional regular hypercube embedded within a subset of the generalised Bloch sphere associated with $\mathcal{H}^d$. Geometrically, these states lie on the surface of a unit-radius $n-$ dimensional sphere, forming a structure we refer to as the Clifford sphere. The angular separation between any two preparations is determined by the Hamming distance between their respective bit-strings, with complementary bit-pairs mapped to antipodal points on the sphere, as illustrated in Fig.~\ref{fig:clifford-bloch}. 

Our self-testing protocol certifies both the quantum states and the measurements through an explicitly constructed local unitary, which maps any unknown finite-dimensional strategy onto a reference strategy realised in the minimal Hilbert space. This unitary preserves all relevant measurement statistics, thereby uniquely identifying the underlying quantum resources (up to local isometries and complex conjugation) solely on the basis of the observed success probability.

The fully DI nature of our scheme enables its direct application to randomness certification. Attaining the optimal quantum value in the POM task rules out the existence of any preparation-noncontextual ontological model, thereby guaranteeing that Bob's measurement outcomes exhibit intrinsic randomness. This observation paves the way for single-device randomness expansion protocols, in which certified randomness can be generated even in the presence of quantum side information potentially held by an adversary.

Finally, we highlight that the necessity of $n$ mutually anticommuting observables to achieve the optimal quantum value intrinsically fixes the minimal dimension of the underlying quantum system.  Hence, the POM task functions as an DI quantum dimension witness, offering an operational method for dimension certification. Our findings also motivate the development of DI self-testing protocols for other prepare-and-measure communication games, particularly those incorporating alternative operational constraints or symmetry conditions \cite{pan2019, Abhyoudai2023, Paul2024, Roy2024}.


\section*{Acknowledgements}

SS acknowledges fruitful discussions with Som Kanjilal regarding the structure of Clifford algebras, which significantly contributed to clarifying the geometric structures analysed in this work. RKS acknowledges the financial support from the Council of Scientific and Industrial Research (CSIR, 09/1001(17051)/2023-EMR-I), Government of India. SS acknowledges the support from the National Natural Science Fund of China (Grant No. G0512250610191) and the local hospitality from the research grant SG-160 of IIT Hyderabad, India. SN acknowledges the support from the research grant MTR/2021/000908, Government of India. AKP acknowledges the support from Research Grant No. SERB/CRG/2021/004258, Government of India.



\appendix
\onecolumngrid


\section{Derivation of classical optimal success probability for $2,3-$bit POM task}\label{copt23}

In $2-$bit POM, Alice randomly receives a $2-$bit string $x^{\delta}\in\{00,01,10,11\}$.  Given that the parity set comprises a single element, $\mathbb{P}_2=\{11\}$, we observe that for $x^{\delta}\in \{00,11\}$, the parity condition satisfies $x^\delta.s = 1$, whereas for $x^{\delta}\in\{01,10\}$, we have $x^\delta.s = 0$. This means that to ensure parity obliviousness, the following condition must uphold
\begin{equation}\label{po1}
\forall y \ \ \ p(b|\rho_{00},B_y) + p(b|\rho_{11},B_y)=p(b|\rho_{01},B_y) + p(b|\rho_{10},B_y)   
\end{equation}
This simply means that in an ontological model 
\begin{equation}
\label{po11}
 \forall \lambda \ \ \      \mu_{00}(\lambda)+\mu_{11}(\lambda)=\mu_{01}(\lambda)+\mu_{10}(\lambda)
\end{equation}
Given a $\lambda$, the response function is given by $\xi(b|E^{b}_k,\lambda)$ where $k=1,2$. Now, the success probability in the ontological model is given by
\begin{equation}
    \begin{aligned}
        (\mathcal{S}_2)_C &= \frac{1}{8}\int\bigg[\qty(\mu_{00}(\lambda)+\mu_{01}(\lambda))\xi(0|E^{0}_1,\lambda)+\qty(\mu_{00}(\lambda)+\mu_{10}(\lambda))\xi(0|E^{0}_2,\lambda)\\
        &+\qty(\mu_{11}(\lambda)+\mu_{10}(\lambda)\xi(1|E^{1}_1,\lambda))+\qty(\mu_{11}(\lambda)+\mu_{01}(\lambda))\xi(1|E^{1}_2,\lambda)\bigg]d\lambda
    \end{aligned}
\end{equation}
Using Eq. (\ref{po11}) and by noting $\xi(1|E^{1}_k,\lambda)=1-\xi(0|E^{0}_k,\lambda)$ we obtain

\begin{equation}
        \qty(\mathcal{S}_2)_C = \frac{1}{2} + \frac{1}{4}\bigg[\int\mu_{00}(\lambda)\qty(\xi(0|E^{0}_1,\lambda)+\xi(0|E^{0}_2,\lambda))d\lambda- \int\mu_{10}(\lambda)\xi(0|E^{0}_1,\lambda)d\lambda
        - \int\mu_{01}(\lambda)\xi(0|E^{0}_2,\lambda)d\lambda\bigg]
\end{equation}
For maximization of success probability, we need to consider that there exists at least a $\lambda$ for which $\xi(0|E^{0}_1,\lambda)=\xi(0|E^{0}_2,\lambda)=1$. Thus we have the following expression for success probability
 \begin{equation}
        \qty(\mathcal{S}_2)_C = \frac{1}{2} + \frac{1}{4}\qty[2\int\mu_{00}(\lambda)d\lambda- \int\mu_{10}(\lambda)d\lambda
        - \int\mu_{01}(\lambda)d\lambda]
\end{equation}
To maximize, we need to have non-zero support of $\lambda$ to $\mu_{00}(\lambda)$ and zero support to $\mu_{01}(\lambda)$ and $\mu_{10}(\lambda)$. But to satisfy Eq. (\ref{po11}), both $\mu_{01}(\lambda)$ and $\mu_{10}(\lambda)$ cannot have nonzero support. Therefore, we have the maximum success probability
\begin{equation}
    \qty(\mathcal{S}_2)_C = \frac{1}{2}\qty(1+\frac{1}{2})
\end{equation}

For the $3$-bit POM task, parity set is $\mathbb{P}_3=\{011,101,110,111\}$. The parity oblivious constraint dictates that for any $\lambda$ in an ontological model the following conditions need to be satisfied.
\begin{align}
    &s=011: \ \mu_{000}(\lambda) + \mu_{011}(\lambda) + \mu_{100}(\lambda) + \mu_{111}(\lambda) = \mu_{110}(\lambda) + \mu_{101}(\lambda) + \mu_{010}(\lambda) + \mu_{001}(\lambda) \label{pos31}\\
&s=101: \mu_{000}(\lambda) + \mu_{001}(\lambda) + \mu_{110}(\lambda) + \mu_{111}(\lambda) = \mu_{010}(\lambda) + \mu_{100}(\lambda) + \mu_{011}(\lambda) + \mu_{101}(\lambda) \label{pos32}
\\
&s=110: \ \mu_{000}(\lambda) + \mu_{010}(\lambda) + \mu_{101}(\lambda) + \mu_{111}(\lambda) = \mu_{001}(\lambda) + \mu_{100}(\lambda) + \mu_{011}(\lambda) + \mu_{110}(\lambda)\label{pos33}
\\
&s=111: \ \mu_{000}(\lambda) + \mu_{011}(\lambda) + \mu_{101}(\lambda) + \mu_{110}(\lambda) = \mu_{001}(\lambda) + \mu_{010}(\lambda) + \mu_{100}(\lambda) + \mu_{111}(\lambda) \label{pos34}  
\end{align}
Taking pairwise sum and difference of the above parity-oblivious conditions, we obtain the following set of equations
\begin{equation}\label{depo}
    \begin{aligned}
        \mu_{000}(\lambda) + \mu_{111}(\lambda) &= \mu_{101}(\lambda) + \mu_{010}(\lambda)  = \mu_{001}(\lambda) + \mu_{110}(\lambda)\\
        &= \mu_{010}(\lambda) + \mu_{001}(\lambda) = \mu_{100}(\lambda) + \mu_{011}(\lambda) \\
        &= \mu_{001}(\lambda) + \mu_{110}(\lambda) = \mu_{011}(\lambda) + \mu_{100}(\lambda)\\
        &\\
        \mu_{100}(\lambda) + \mu_{111}(\lambda) &= \mu_{101}(\lambda) + \mu_{110}(\lambda), \quad \mu_{000}(\lambda) + \mu_{011}(\lambda) = \mu_{010}(\lambda) + \mu_{001}(\lambda)\\
        \mu_{001}(\lambda) + \mu_{111}(\lambda) &= \mu_{011}(\lambda) + \mu_{101}(\lambda),\quad\mu_{000}(\lambda) + \mu_{110}(\lambda) = \mu_{010}(\lambda) + \mu_{100}(\lambda),\\
        \mu_{010}(\lambda) + \mu_{111}(\lambda) &= \mu_{011}(\lambda) + \mu_{110}(\lambda), \quad \mu_{000}(\lambda) + \mu_{101}(\lambda) = \mu_{001}(\lambda) + \mu_{100}(\lambda)
    \end{aligned}
\end{equation}
The expression for maximum success probability is 
\begin{equation}
\begin{aligned}
    (\mathcal{S}_3)_C = \frac{1}{24}
\Big[
&\int\qty(\mu_{000}(\lambda) + \mu_{001}(\lambda) + \mu_{010}(\lambda) + \mu_{011}(\lambda)) \xi(0 \mid E_1^0, \lambda)d\lambda\\
+ &\int\qty(\mu_{000}(\lambda) + \mu_{001}(\lambda) + \mu_{100}(\lambda) + \mu_{101}(\lambda)) \xi(0 \mid E_2^0, \lambda)d\lambda\\
+ &\int\qty(\mu_{000}(\lambda) - \mu_{010}(\lambda) + \mu_{100}(\lambda) + \mu_{110}(\lambda)) \xi(0 \mid E_3^0, \lambda)d\lambda\\
+ &\int\qty(\mu_{100}(\lambda) + \mu_{101}(\lambda) + \mu_{110}(\lambda) + \mu_{111}(\lambda)) \xi(1 \mid E_1^1, \lambda)d\lambda\\
+ &\int\qty(\mu_{010}(\lambda) + \mu_{011}(\lambda) + \mu_{110}(\lambda) + \mu_{111}(\lambda)) \xi(1 \mid E_2^1, \lambda)d\lambda\\
+ &\int\qty(\mu_{001}(\lambda) + \mu_{011}(\lambda) + \mu_{101}(\lambda) + \mu_{111}(\lambda)) \xi(1 \mid E_3^1, \lambda)d\lambda\Big]
\end{aligned}
\end{equation}

This upon simplification using the fact that $\xi(1|E^{1}_k,\lambda)=1-\xi(0|E^{0}_k,\lambda)$, and using relations in Eq.~(\ref{depo}) we obtain
\begin{equation}
\begin{aligned}
(\mathcal{S}_3)_C = \frac{1}{2}+\frac{1}{24}
\Bigg[
&\int\qty[\mu_{000}(\lambda) + \mu_{011}(\lambda)-\mu_{100}(\lambda) - \mu_{111}(\lambda)]\xi(0 \mid E_1^0, \lambda)d\lambda \\
+ &\int\qty[\mu_{000}(\lambda) + \mu_{101}(\lambda)-\mu_{010}(\lambda) - \mu_{111}(\lambda)] \xi(0 \mid E_2^0, \lambda)d\lambda \\
+ &\int\qty[\mu_{000}(\lambda) + \mu_{110}(\lambda) - \mu_{001}(\lambda)- \mu_{111}(\lambda)] \xi(0 \mid E_3^0, \lambda)d\lambda\Bigg]
\end{aligned}
\end{equation}
Similar to the argument for $2-$bit case, we assume that  there exists at least a $\lambda$ for which $\xi(0|E^{0}_1,\lambda)=\xi(0|E^{0}_2,\lambda)=\xi(0|E^{0}_3,\lambda)=1$. Thus we have the following expression for success probability 
\begin{equation}
(\mathcal{S}_3)_C= \frac{1}{2}+\frac{1}{24}
\Bigg[
\int\qty[3\mu_{000}(\lambda) + \mu_{011}(\lambda)+ \mu_{101}(\lambda) + \mu_{110}(\lambda)-\mu_{100}(\lambda)-\mu_{010}(\lambda)- \mu_{001}(\lambda) - 3\mu_{111}(\lambda)]d\lambda \Bigg]
\end{equation}
We maximize $(\mathcal{S}_3)_C$ by appropriately considering the support of $\lambda$ satisfying the constraint in Eq. (\ref{depo}). We finally derive
\begin{equation}
    \qty(\mathcal{S}_3)_C= \frac{1}{2} \qty(1+\frac{1}{ 3})
\end{equation}
which matches with success probability derived in \cite{Spekkens2009}. Following the derivation for $n=2$ and $n=3$, one can derive the success probability for an arbitrary $n$. We have checked upto $n=8$, which is in accordance with the Eq.~(\ref{clprob}).

\section{Derivation of equality of anticommutation of $\{A_i,A_j\}$ for $n=3$ (Eq.~\ref{acpo3})} \label{apnA}

The parity oblivious conditions are
\begin{align}
    &s=011: \ A_0 + A_3 + A_4 + A_7 = A_6 + A_5 + A_2 + A_1 \label{poobs31}\\
&s=101: \ A_0 + A_1 + A_6 + A_7 = A_2 + A_4 + A
_3+ A_5 \label{poobs32}
\\
&s=110: \ A_0 + A_2 + A_5 + A_7 = A_1 + A_4 + A_3 + A_6\label{poobs33}
\\
&s=111: \ A_0 + A_3 + A
_5+ A_6 = A_1 + A_2 + A_4 + A_7 \label{poobs34}  
\end{align}
Substrating Eq.~(\ref{poobs31}) and (\ref{poobs32}), we obtain
\begin{equation}
\begin{aligned}
    A_0 + A_3 + A_4 + A_7 - (A_0 + A_1 + A_6 + A_7) &= A_6 + A_5 + A_2 + A_1 - (A_2 + A_4 + A_3+ A_5) \\
    A_3 + A_4 &= A_1+A_6
 \end{aligned}   
\end{equation}
Similarly Substrating Eq.~(\ref{poobs31}) and (\ref{poobs33}) gives $A_3+A_4 = A_2+A_5$ and Eq.(\ref{poobs31})+(\ref{poobs32}) gives $A_0+A_7=A_2+A_5$. Hence we have $A_0+A_7 = A_1+A_6 = A_2+A_5 = A_3 + A_4$. Squaring the both sides of the resultant relations, we obtain $\{A_0,A_7\} = \{A_1,A_6\} = \{A_2,A_5\} = \{A_3 , A_4\}$.
\section{Derivation of equality of anticommutation of $\{A_i,A_j\}$ for $n-$bit POM task (Eq.~\ref{acpon})} \label{apnB}

The parity-obliviousness condition for every parity vector $s\in\mathbb{P}_n$ (i.e. of Hamming weight $\geq 2$), where the set $\mathbb{P}_n$ is defined in Eq.~(\ref{porac set}). Considering each state is of the form $\rho_{x^{\delta}}=\frac{1}{d}\qty(\openone+A_\delta)$, with $A_\delta$ being traceless, dichotomic Hermitian operators, the condition follows from Eq.~(\ref{postaten}) becomes
\begin{equation}\label{poobsnap}
    \forall s,B,b \ \sum\limits_{x^{\delta}|x^{\delta}.s=0}A_{x^{\delta}}=\sum\limits_{x^{\delta}|x^{\delta}.s=1}A_{x^{\delta}} \implies \sum_{x^{\delta}}(-1)^{x^{\delta}\cdot s} A_X^{\delta} =0
\end{equation}
This is a Fourier-type orthogonality condition: the sum over the operators, weighted by the parity character $(-1)^{x^{\delta}\cdot s}$, must vanish for all non-trivial $s$.

Let $x^{\overline{\delta}}$ denotes the bitwise compliment of $x^{\delta}\in\{0,1\}^n$, with $\overline{\delta}$ its decimal index. The following identity holds
\begin{equation}
    x^\delta \cdot s \oplus x^{\overline{\delta}}\cdot s=1 \implies (-1)^{x^\delta \cdot s}+(-1)^{x^{\overline{\delta}}\cdot s}=0
\end{equation}
Substituting this into Eq.~(\ref{poobsnap}), we obtain
\begin{equation} \label{poobsnap1}
    \sum_{x^{\delta}=0}^{2^{n-1}-1}(-1)^{x^{\delta}\cdot s} A_\delta =0 \implies \sum_{x^{\delta}=0}^{2^{n-1}-1}(-1)^{x^{\delta}\cdot s} \qty(A_\delta-A_{\overline{\delta}})=0
\end{equation}
Each parity condition $s\in\mathbb{P}_n$ defines a parity character $\chi_s(x^\delta):=(-1)^{x^{\delta} \cdot s}$. We may view $D_{\delta}:=\qty(A_\delta-A_{\overline{\delta}})$ as a function on $x^{\delta}\in{0,1}^n$. Then parity-obliviousness says that the Fourier coefficient of $D_\delta$ along any $s\in \mathbb{P}_n$ must be zero, implying $\braket{D_\delta,\chi_s}=\sum_\delta (-1)^{x^{\delta} \cdot s}D_\delta=0\implies D_\delta \perp \chi_s$. Therefore, the set of all $D_\delta$ lies in the subspace orthogonal to the one spanned by the parity characters $\chi_s$, i.e., the parity functions. This orthogonality imposes that the algebraic structure of each $D_{\delta}$ is constrained identically across $\delta$, implying their spectral properties—such as norms or eigenvalue distributions—are uniform (up to unitary equivalence).

Now consider the anticommutator
\begin{equation}
    \{A_\delta,A_{\overline{\delta}}\}=A_\delta A_{\overline{\delta}}+A_{\overline{\delta}}A_\delta =\qty(A_\delta+
    A_{\overline{\delta}})^2-2 \openone \ \ (\text{using} \ A_\delta^2=A_{\overline{\delta}}^2=\openone)
\end{equation}
This implies that the anticommutator is determined entirely by the sum $S_\delta:=(A_\delta+A_{\overline{\delta}})$. Since $A_\delta=\frac{1}{2}\qty(S_\delta+D_\delta)$, substituting into Eq.~(\ref{poobsnap}) gives
\begin{equation}
    \sum_{x^{\delta}=0}^{2^{n-1}-1}(-1)^{x^{\delta}\cdot s} A_\delta =0 \implies  \sum_{x^{\delta}=0}^{2^{n-1}-1}(-1)^{x^{\delta}\cdot s} \qty(S_\delta+D_\delta) =0 \implies \sum_{x^{\delta}=0}^{2^{n-1}-1}(-1)^{x^{\delta}\cdot s} S_\delta =0 \ \ (\text{since} \ D_\delta \perp \chi_s)
\end{equation}
Hence, the set $\{S_\delta\}$ lies in the same orthogonal subspace and is constrained identically across $\delta$, i.e., $S_\delta \perp \chi_s$. It follows that all $S_\delta$ have the same spectral properties, and therefore all anticommutators $\{A_\delta,A_{\overline{\delta}}\}$ are equal across different $\delta$. Thus,
\begin{equation}
   \{A_\delta,A_{\overline{\delta}}\}=\{A_\eta,A_{\overline{\eta}}\} \ \ \forall \delta \neq \eta.
\end{equation}

\section{Derivation of equality anticommutation of Bob's observables $\{B_y,B_{y'}\}=0$} \label{apnC}

The Operator Norm $\omega_\delta$ is given by
\begin{equation}
\omega_\delta = \norm{\qty(\sum_{y = 1}^n (-1)^{x^\delta_y} B_y)} \ \forall x^{\delta}\in {0,1}^n; \ y \in [n]
\end{equation}
Bob's observables $B_y$ satisfies, $\Tr[B_y]=0$ and $B_y^2=\openone$. $\omega_\delta$ is the maximum eigenvalue magnitude of the observable $\sum_{y} (-1)^{x^\delta_y} B_y$.

To derive the optimality condition, we have applied the Cauchy–Schwarz inequality 
\begin{equation}
    \sum_{\delta=0}^{2^n-1-1} \omega_\delta \leq \sqrt{2^{n-1}}\sqrt{\sum_{\delta=0}^{2^n-1-1}\omega_\delta^2}
\end{equation}
Equality is attained \textit{iff} $\omega_i=\omega_j \ \forall i \neq j$. Evaluating the squared operator norm explicitly, we obtain 
\begin{equation}
\begin{aligned}
   \omega_\delta^2=\norm{\sum_{y=1}^n \epsilon_y B_y}^2 &= \norm{n \openone + \sum_{y\neq z}\epsilon_y \epsilon_z B_y B_z} \ \ \text{where} \  \epsilon_y:=(-1)^{x_y^{\delta}} \in \{\pm 1\} \\
    &= \norm{n \openone + \sum_{y < z}\epsilon_y \epsilon_z B_y B_z + \sum_{z < y}\epsilon_y \epsilon_z B_y B_z} \\
    &= \norm{n \openone + \sum_{y < z}\epsilon_y \epsilon_z B_y B_z + \sum_{y < z}\epsilon_y \epsilon_z B_z B_y} \ \ (\text{Changing the index of the last term}) \\
    &= \norm{n \openone + \sum_{y < z}\epsilon_y \epsilon_z \qty(B_y B_z + B_z B_y)} = \norm{n \openone + \sum_{y < z}\epsilon_y \epsilon_z \{B_y, B_z\}}
    \end{aligned}
\end{equation}
Now, if all $\omega_\delta$ are equal, it should be independent on the bit string $x^{\delta}$. Thus, the cross term appearing in $\omega_\delta^2$ must disappear, hence $\{B_y,B_z\}=0$ as $\epsilon_y\neq 0$. 
\section{Optimal preparations are the vertices of the $n$-dimensional Hypercube}\label{Appxcbghs}

From the optimality condition derived in Sec.~\ref{optqnc}, the states prepared by Alice take the form
\begin{equation}\label{gstate}
    \rho_{x^\delta} = \frac{1}{d} \ \qty[\openone+\sum_{y = 1}^n\qty(\frac{(-1)^{x^\delta_y}}{\sqrt{n}}) \ B_y ]
\end{equation}
where $B_y\in\mathscr{L}(\mathcal{H}^d)$ are traceless Hermitian operators satisfying the Clifford algebra relations $\{B_y,B_{y'}\}=2 \delta_{yy'} \openone$ \cite{Wehner2010}. We now interpret these states in terms of an $n$-dimensional Bloch-like decomposition over the orthonormal basis $\{B_1,\dots,B_n\}$, where the Bloch vector $\Vec{r}_{x^{\delta}} \in \mathbb{R}^n$,is given by
\begin{equation}
    \Vec{r}_{x^{\delta}}=\frac{(-1)^{x^\delta_y}}{\sqrt{n}}
\end{equation}
Given that $x^{\delta} \in (0,1)^n$, there exists $2^n$ such Bloch vectors. Each component of the vector takes the value $\pm \frac{1}{\sqrt{n}}$, depending on the corresponding bit value in $x^{\delta}$. Thus, all such vectors lies within the set of coordinates $\qty(\pm \frac{1}{\sqrt{n}},\dots,\pm \frac{1}{\sqrt{n}})$, forming the
$2^n$ vertices of an $n$-dimensional hypercube scaled by $\frac{1}{\sqrt{n}}$.  The Euclidean norm of each vector is given by 
\begin{equation}
   \norm{\vec{r}_{x^{
\delta}}}=\sqrt{\sum_{y=1}^n\qty(\frac{1}{\sqrt{n}})^2} =1
\end{equation}
Hence, all vectors lie on the surface of a unit-radius sphere in $\mathbb{R}^n$. This implies that the Clifford-Bloch vectors form a regular hypercube inscribed within the unit sphere. Each vector represents a direction that indicates the state's deviation from the maximally mixed state, but constrained entirely within the Clifford subspace.

To evaluate the distance between any two such vectors, $\vec{r}_{x^{\delta}}$ and $\vec{r}_{\tilde{x}^{\delta}}$, we write
\begin{equation}
\begin{aligned}
        \vec{r}_{x^{\delta}}&=\frac{1}{\sqrt{n}}\qty((-1)^{x_1^{\delta}},\dots,(-1)^{x_n^{\delta}}) \ \forall x^{\delta}\in(0,1)^n \\
        \vec{r}_{\tilde{x}^{\delta}}&=\frac{1}{\sqrt{n}}\qty((-1)^{\tilde{x}_1^{\delta}},\dots,(-1)^{\tilde{x}_n^{\delta}}) \ \forall \tilde{x}^{\delta}\in(0,1)^n 
\end{aligned}
\end{equation}
Let $h =Ham(x,\tilde{x})$ denote the hamming distance between the two bit strings $x^{\delta},\tilde{x}^{\delta}\in \{0,1\}^n$, i.e., the number of indices $i$ for which $x_\iota\neq \tilde{x}_i$. Then the difference is given by
\begin{equation}
    \vec{r}_{x^{\delta}}-\vec{r}_{\tilde{x}^{\delta}}=\frac{1}{\sqrt{n}}\qty((-1)^{x_1^{\delta}}-(-1)^{\tilde{x}_1^{\delta}},\dots,(-1)^{x_n^{\delta}}-(-1)^{\tilde{x}_n^{\delta}})
\end{equation}
Each differing bit contributes $(\pm 2)^2=4$, so the squared Euclidean distance is
\begin{equation}
    \norm{\vec{r}_{x^{\delta}}-\vec{r}_{\tilde{x}^{\delta}}}=\frac{1}{n}\sum_{y=1}^n\qty((-1)^{x_i^{\delta}}-(-1)^{\tilde{x}_i^{\delta}})^2=\frac{4h}{n} 
\end{equation}
This confirms that the Clifford Bloch vectors constitute the vertices of a regular $n$-dimensional hypercube of edge length $\frac{2}{\sqrt{n}}$, centred at the origin and inscribed within the unit sphere in $\mathbb{R}^n$.

However, this Clifford subspace does not span the entire quantum state space unless $n=d^2-1$. In general, the full Bloch sphere of a $d$-dimensional system resides in $\mathbb{R}^{d^2-1}$, whereas the Clifford subspace has dimension $n\leq d^2-1$. Since the maximal number of mutually anticommuting traceless Hermitian operators is at most $2m+1$ for $d=2^m$, equality holds only for $d=2$ (where Pauli matrices span the entire Bloch space) and $d=3$ (where Gell-Mann matrices span the full space). In higher dimensions, the Clifford subspace represents only a proper subspace, and the corresponding Bloch vectors form a structured slice of the full state space.

The restriction to dichotomic observables with spectrum $\{\pm 1\}$, is essential for ensuring anti-commutation and preserving the Clifford structure. This constraint inherently limits the dimension of the subspace these observables can span. Nonetheless, this dichotomicity is crucial for realising the hypercube structure within the Bloch representation, and for achieving the optimal quantum bias in POM tasks.

\begin{figure*}[t]
    \centering
    \begin{subfigure}[b]{0.32\textwidth}
        \includegraphics[width=\linewidth]{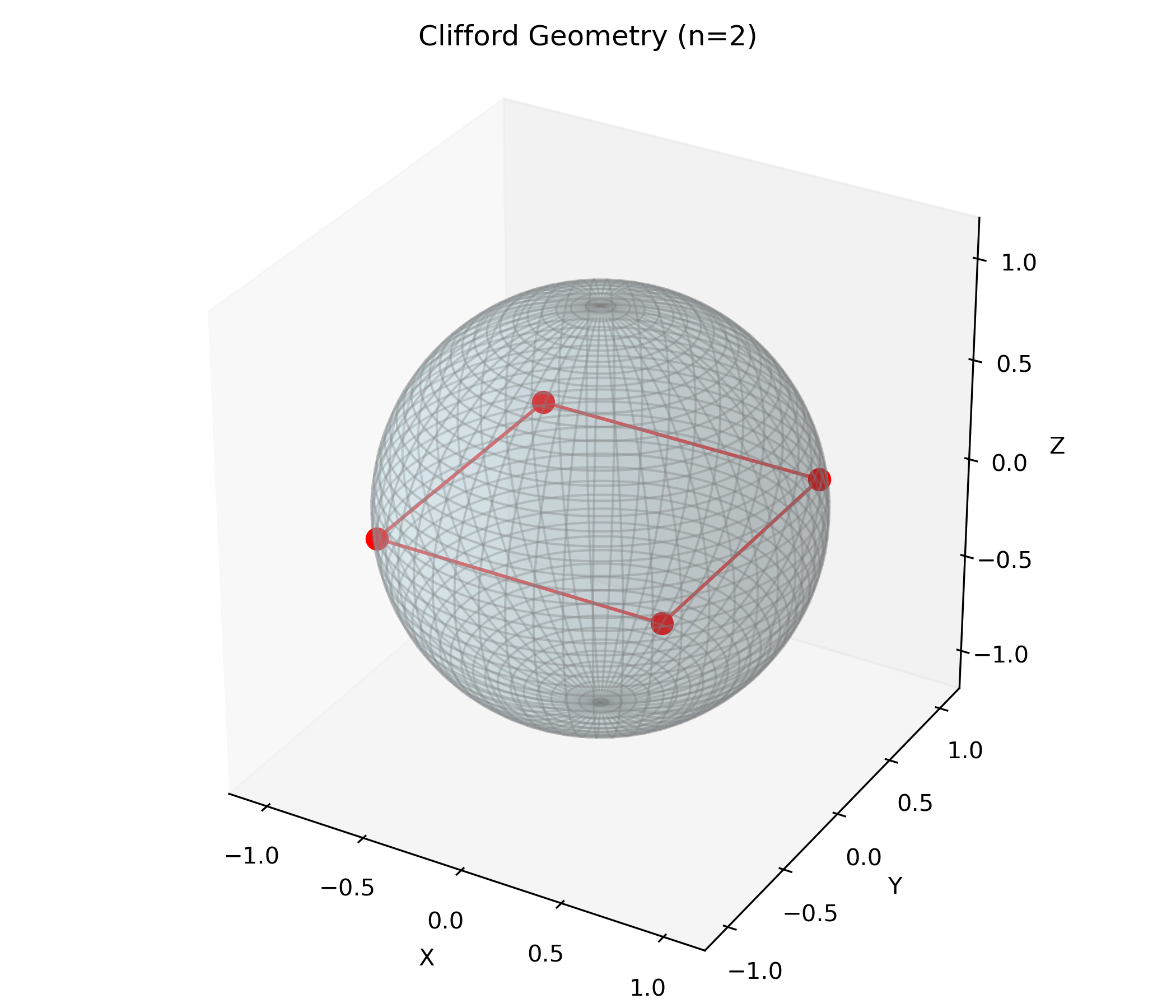}
        \caption{Bloch sphere with Clifford square ($n=2$)}
    \end{subfigure}
    \hfill
    \begin{subfigure}[b]{0.32\textwidth}
        \includegraphics[width=\linewidth]{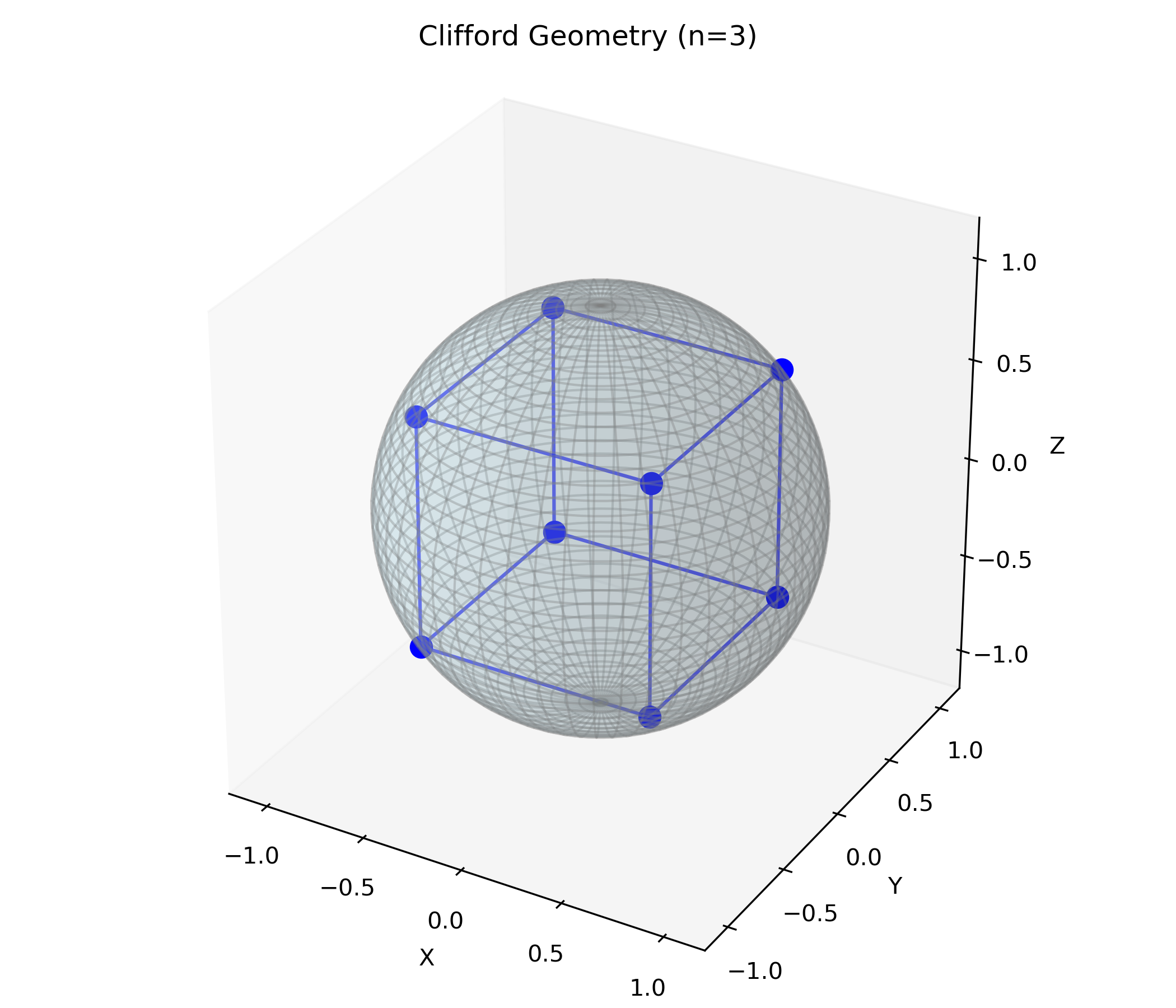}
        \caption{Clifford cube inside Bloch sphere ($n=3$)}
    \end{subfigure}
    \hfill
    \begin{subfigure}[b]{0.32\textwidth}
        \includegraphics[width=\linewidth]{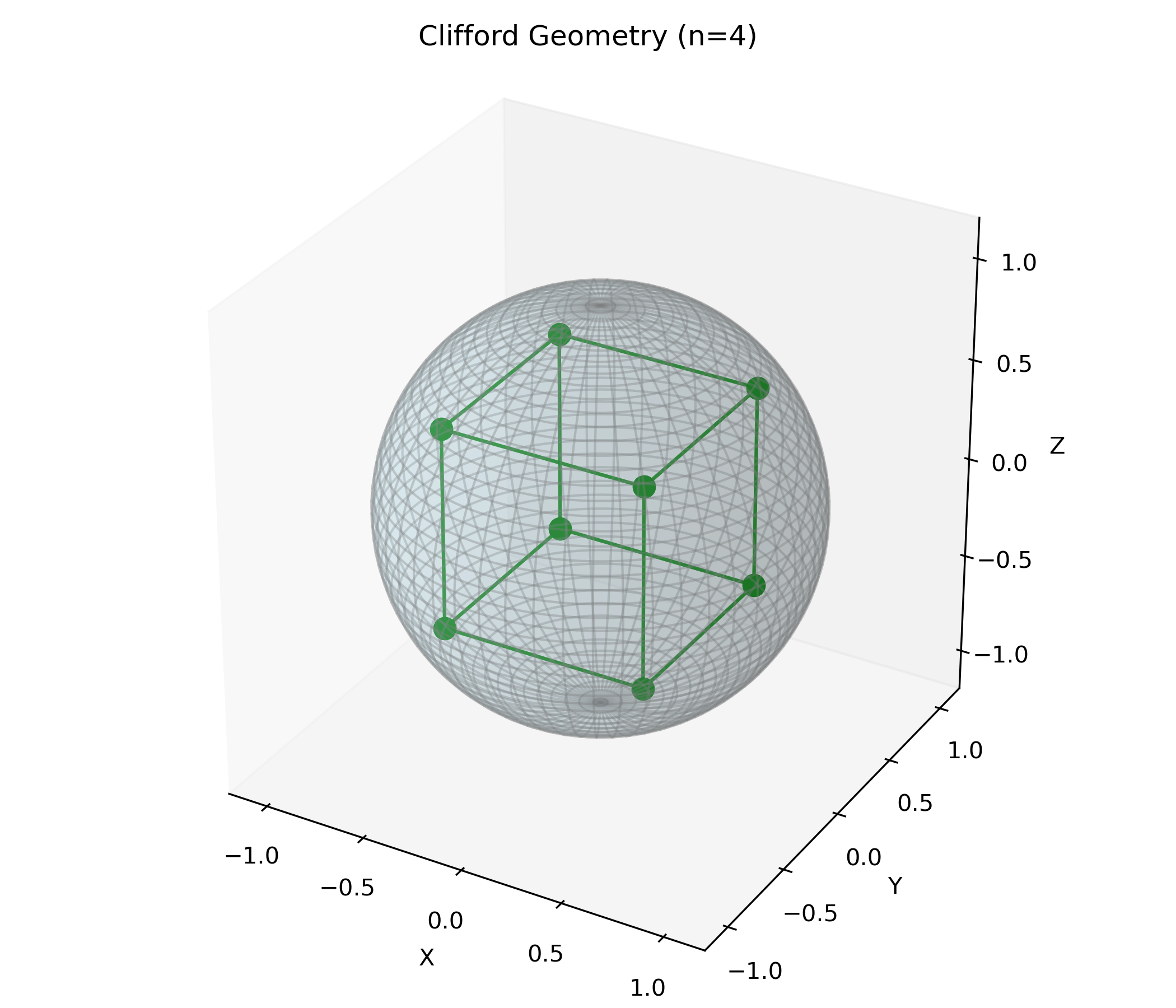}
        \caption{Projected 4D Clifford hypercube ($n=4$)}
    \end{subfigure}
    \caption{
        Geometric representation of Clifford-Bloch state preparation strategies for the $n$-bit POM task. 
        (a)~For $n=2$, the four prepared states form a square on the equatorial plane of the Bloch sphere, corresponding to the bit strings \texttt{00}, \texttt{01}, \texttt{10}, and \texttt{11}. Each vector points along a distinct Clifford direction generated by two mutually anticommuting observables. 
        (b)~For $n=3$, the eight preparation states form a cube inside the Bloch sphere, with each vertex associated with a 3-bit string. Clifford-Bloch vectors lie on the unit sphere and maintain equal pairwise distinguishability based on Hamming distance. 
        (c)~For $n=4$, the 16 Clifford-Bloch vectors define a 4D hypercube projected into 3D. Although only a subset of the structure is visible in projection, the geometry encodes all bit strings of length four, and the vertex connectivity reflects the Hamming-1 adjacency. All vectors lie on the unit sphere within the Clifford subspace and realise optimal preparations for the $n=4$ POM task.
    }
    \label{fig:clifford-bloch}
\end{figure*}


\section{Construction of unitary for $n$-bit POM task}\label{appx2}

The optimal success probability in the $n-$bit POM task imposes the condition that Bob’s observables $\qty{B_y \in \mathscr{L}\qty(\mathcal{H}^d)}_{y=1}^n$ mutually anticommute, i.e., 
\begin{equation}
    \{B_y,B_{y'}\}=2\delta_{yy'}\openone_d \ \forall y,y' \in \{1,2,\dots,n\}
\end{equation}
This condition implies that the set $\{B_y\}$ forms a representation of the complex Clifford algebra $\mathcal{C}l_n(\mathbb{C})$. The smallest dimension of a Hilbert space supporting an irreducible representation of this algebra is $d=2^m$, where $m=\lceil (n-1)/2\rceil$. Therefore, for a given $n$, the reference experiment requires $d^\prime=2^m$ to sufficiently support $n$ mutually anticommuting observables, \textit{i.e.} $\qty{B'_y\in\mathscr{L}(\mathcal{H}^{2^m})}$ for all $y\in[n]$. Thus, any realisation of such a set of observables on a Hilbert space of dimension $d$ can be decomposed as $\mathcal{H}^d=\mathcal{H}^{2^m}\otimes \mathcal{H}^J$, for some integer $J=2^m/d$, such that each observable $B_y$ is unitarily equivalent to (up to complex conjugation) $B'_y\otimes \openone_J$. Here, $B'_y$ belongs to a canonical set of Pauli-like matrices satisfying the required Clifford algebra. We now explicitly construct, via recursion, a unitary operator $\mathcal{U}\in \mathcal{U}(d)$ such that
\begin{equation}
 \exists \mathcal{U} : \mathcal{H}^d \to \qty(\Motimes_{i=1}^m \mathcal{H}^2 ) \otimes \mathcal{H}^J \ \text{s.t.} \ \mathcal{U} B_y \mathcal{U}^\dagger =B^\prime_y \otimes \openone_J \ \forall y \in \{1,2,\dots,n\}
\end{equation}
where $\{B^\prime_y\}$ generate $\mathcal{C}l_n (\mathbb{C})$ in the known dimension $d'=2^m$ (reference experiment).

\subsection{Initial step: Diagonalising the first observable $B_1$}\label{st1} We begin by constructing a unitary operator $U_1:\mathcal{H}^d\to\mathcal{H}^{2}\otimes\mathcal{H}^{2^{m-1}J}$ such that the first observable $B_1$ is mapped to the Pauli like observable $\qty(\sigma_z\Motimes_{i=2}^{m-1}\openone_2)$ acting on a $\qty(\Motimes_{i=1}^{m}\mathcal{H}^2)$ subspace
\begin{equation}\label{b1n}
 \exists U_1 : \mathcal{H}^d \to \qty(\mathcal{H}^{2}\otimes\mathcal{H}^{2^{m-1}})\otimes \mathcal{H}^J \ \text{s.t.} \ B_1 \mapsto \mathtt{B}_1:= U_1 B_1 U_1^{\dagger}=\qty(\sigma_z\Motimes_{i=2}^{m-1}\openone_2)\otimes\openone_{J} 
\end{equation}
for some $J = 2^m/d$. Note that $\{B_1,B_{y}\}=2\delta_{1,y}\openone_{d}\implies \{\mathtt{B}_1,\mathtt{B}_y\}=2 \delta_{1,y}\openone_{d} \ \forall y$. Consequently, in the basis where $\mathtt{B}_1$ is diagonalised via the unitary transformation $U_1$, each observable become off-diagonal with respect to this basis. Specifically, each transformed $\mathtt{B}_y:=U_1B_yU_1^\dagger$ takes the form
\begin{equation}\label{e5}
 B_y \mapsto \mathtt{B}_y:=U_1 B_y U_1^\dagger =\begin{bmatrix}
        0 & X_y\\
        X_y^\dagger & 0
    \end{bmatrix}, \ \forall y \in \{2,3,\dots,n\} \ \  \text{with} \ X_y \in \mathscr{L}(\mathcal{H}^{2^{m-1}J})
\end{equation}

\subsection{Diagonalising the second observable $B_2$}\label{st2} Now, choose one such off-diagonal observable, say $B_2$, and define the following unitary $U_2:\mathcal{H}^d\to \qty(\mathcal{H}^{2}\otimes\mathcal{H}^{2^{m-1}})\otimes \mathcal{H}^J$ 
\begin{equation}\label{e6}
    U_2=\begin{bmatrix}
        \openone_{2^{m-1}} & 0\\
        0 & \iota X_2
    \end{bmatrix}
\end{equation}
It is straightforward to verify that $U_2 \mathtt{B}_1 U_2^\dagger = \mathtt{B}_1$ and transforms $\mathtt{B}_2$ into
\begin{equation}\label{b2n}
    U_2 \mathtt{B}_2 U_2^\dagger = \qty(\sigma_y \Motimes_{i=2}^{m-1}\openone_2)\otimes \openone_J
\end{equation}

\subsection{Recursive step: Diagonalising observables $B_y$ for $y\geq 3$}\label{st3} For $y \geq 3$, the anticommutation $\{B_2,B_y\}=0 \implies X_y=-X_2X_y^\dagger X_2$. Now the action of $U_2$ on $\mathtt{B}_y$ is given by 
\begin{equation}
    U_2\mathtt{B}_yU_2 = \begin{bmatrix}
        0 & -\iota X_y X_2^\dag \\
         -\iota X_y X_2^\dag & 0
    \end{bmatrix}= \sigma_x\otimes\qty(-\iota X_yX_2^\dag) \ \ \forall y \in \{3,4,\dots,n\}
\end{equation}
Note that we already obtain explicit forms for $B^\prime_1$ and $B^\prime_2$ in some known dimension $d'=2^m$, from Eqs.~(\ref{b1n}) and (\ref{b2n}). Now to find the set of remaining $(n-2)$ mutually anticommuting observables $B_y^\prime$, we now define a new set of Hermitian operators acting on $\mathcal{H}^{2^{m-1}J}$ as $\mathcal{B}_y^{(1)} := (-\iota X_y X_2^\dagger)$ for $y\geq 3$. The anticommutation relation $\{\mathtt{B}_y,\mathtt{B}_{y^\prime}\}=2\delta_{y y'}\openone_{2^{m}J}$ implies $\qty{\mathcal{B}_y^{(1)},\mathcal{B}_{y^\prime}^{(1)}}=2\delta_{y y'}\openone_{2^{m-1}J}$ for $y,y'\in \{3,\dots,n\}$. Thus, such a set $\{\mathcal{B}^{(1)}_y\}_{y=3}^n$ forms a representation of the complex Clifford algebra $\mathcal{C}l_{n-2}(\mathbb{C})$ in $(n-2)$ dimensional subspace. The smallest dimension of a Hilbert space supporting an irreducible representation of this algebra is $d=2^{m-2}$, where $m=\lceil (n-1)/2\rceil$.

Following steps \ref{st1} and \ref{st2}, by suitably defining the unitary operators $V^{(1)}_1, V^{(1)}_2:\mathcal{H}^{2^{m-1}J}\to \mathcal{H}^{2^m-1}\otimes\mathcal{H}^J$ such that 
\begin{equation}\label{b33n}
    \begin{aligned}
      \qty(V^{(1)}_2  V^{(1)}_1) \mathcal{B}^{(1)}_3 \qty(V^{(1)}_2V^{(1)}_1)^\dagger&=\qty(\sigma_z \otimes \openone_{2^{m-2}})\otimes \openone_J \ ; \ \ \qty(V^{(1)}_2  V^{(1)}_1) \mathcal{B}^{(1)}_4 \qty(V^{(1)}_2V^{(1)}_1)^\dagger =\qty(\sigma_y \otimes \openone_{2^{m-2}})\otimes \openone_J \\
        \qty(V^{(1)}_2  V^{(1)}_1) \mathcal{B}^{(1)}_y \qty(V^{(1)}_2V^{(1)}_1)^\dagger&=\sigma_x \otimes \qty(-\iota \tilde{X}_y \tilde{X}_4^\dagger) \ \ \forall y \in \{5,6,\dots,n\}\\
    \end{aligned}
\end{equation}
$V^{(1)}_1, V^{(1)}_2,\tilde{X}_y \in \mathscr{L}(\mathcal{H}^{2^{m-1}J})$ are constructed in the similar spirit of Eqs.~(\ref{e5})and (\ref{e6}). Now, up to this point, we have a unitary $U'=\qty[\openone_2\otimes \qty(V^{(1)}_2  V^{(1)}_1)]U_2U_1$. It is straightforward to obtain the following.
\begin{equation}
\begin{aligned}
        U'B_1U'^\dag &= \qty(\sigma_z\Motimes_{i=2}^{m-1}\openone_2)\otimes\openone_{J} \\
        U'B_2U'^\dag &= \qty(\sigma_y\Motimes_{i=2}^{m-1}\openone_2)\otimes\openone_{J} \\
        U'B_3U'^\dag&=\qty(\sigma_x \otimes\sigma_z \otimes \openone_{2^{m-2}})\otimes \openone_J\\
        U'B_4U'^\dag&=\qty(\sigma_x \otimes\sigma_y \otimes \openone_{2^{m-2}})\otimes \openone_J\\
        U'B_yU'^\dag&=\sigma_x \otimes \qty(-\iota \tilde{X}_y \tilde{X}_4^\dagger) \ \ \forall y \in \{5,6,\dots,n\}
\end{aligned}
\end{equation}

Assume after $k$ steps, observables $\{B'_1,B'_2,\dots,B'_{2k}\}$ are in Pauli form on $\mathcal{H}^{2^k}$, while the remaining observables $\{B'_{2k+1},\dots,B'_{n}\}$ have the form $\qty(\Motimes_{i=1}^{k}\sigma_x )\otimes\mathcal{B}^{(k)}_y$ acting on $\qty(\Motimes_{i=1}^{k}\mathcal{H}^{2})\otimes \mathcal{H}^{(2^{m}-2k)J}$. Now, for $y=2k+1$, applying the unitary $V^{(k+1)}_2V^{(k+1)}_1$
\begin{equation}
\begin{aligned}
        V^{(k+1)}_2V^{(k+1)}_1 \mathcal{B}^{(k)}_{2k+1} \qty(V^{(k+1)}_2V^{(k+1)}_1)^\dagger&=\qty(\sigma_z\Motimes_{2k+1}^{m} \openone_{2})\otimes \openone_J \\  
    V^{(k+1)}_2V^{(k+1)}_1 \mathcal{B}^{(k)}_{2k+2} \qty(V^{(k+1)}_2V^{(k+1)}_1)^\dagger&=\qty(\sigma_y\Motimes_{2k+1}^{m} \openone_{2})\otimes \openone_J
\end{aligned}
\end{equation}
Remaining observables transform as $\qty(V^{(k+1)}_2V^{(k+1)}_1) \mathcal{B}^{(k)}_{y} \qty(V^{(k+1)}_2V^{(k+1)}_1)^\dagger\otimes \mathcal{B}^{k+1}_y$ with
\begin{equation}
    \qty{\mathcal{B}^{k+1}_y,\mathcal{B}^{k+1}_{y'}}=2\delta_{yy'}\openone_{2^{m-k-1}}
\end{equation}
Thus, such a set $\{\mathcal{B}^{(k+1)}_y\}_{y=2k+1}^n$ forms a representation of the complex Clifford algebra $\mathcal{C}l_{n-2k-1}(\mathbb{C})$ in $(n-2k-1)$ dimensional subspace, where $n$ is related to $m=\lceil (n-1)/2\rceil$.

The recursion terminates after $m=\lceil (n-1)/2\rceil$ steps.  The overall unitary $\mathcal{U}$ is then given by
\begin{equation}
    \mathcal{U}= \qty(\Motimes_{i=1}^{m}\openone_2 \otimes V^{(m)}_2V^{(m)}_1)\qty(\Motimes_{i=1}^{m-1}\openone_2 \otimes V^{(m-1)}_2V^{(m-1)}_1)\dots\qty(\Motimes_{i=1}^{k}  \openone_2 V^{(k)}_2V^{(k)}_1)\dots U_2U_1
\end{equation}
Then following the recursive steps, the action of $\mathcal{U}$ on Bob's observables $B_{y} \ y\in[n]$ is given by
\begin{equation}\label{fau}
\begin{aligned}
        \mathcal{U}B_1\mathcal{U}^\dag &= \qty(\sigma_z\Motimes_{i=2}^{m-1}\openone_2)\otimes\openone_{J} \\
        \mathcal{U}B_2\mathcal{U}^\dag &= \qty(\sigma_y\Motimes_{i=2}^{m-1}\openone_2)\otimes\openone_{J} \\
        \mathcal{U}B_3\mathcal{U}^\dag&=\qty(\sigma_x \otimes\sigma_z \otimes \openone_{2^{m-2}})\otimes \openone_J\\
        \mathcal{U}B_4\mathcal{U}^\dag&=\qty(\sigma_x \otimes\sigma_y \otimes \openone_{2^{m-2}})\otimes \openone_J\\
        \mathcal{U}B_5\mathcal{U}^\dag&=\qty(\sigma_x \otimes\sigma_x \otimes\sigma_z \otimes\openone_{2^{m-3}})\otimes \openone_J\\
        &\vdots \hspace{2cm} \vdots\\
       &\vdots \hspace{2cm} \vdots\\
        \mathcal{U}B_{n-1}\mathcal{U}^\dag&= \begin{cases}\qty(\Motimes_{i=1}^{m-1}\sigma_x \otimes\sigma_y)\otimes \openone_J & \text{if $n$ is odd} \vspace{0.3 cm} \\
        \qty(\Motimes_{i=1}^{m-1}\sigma_x \otimes\sigma_z)\otimes \openone_J & \text{if $n$ is even} \end{cases}\\
        &\\
        \mathcal{U}B_{n}\mathcal{U}^\dag&=\begin{cases}\qty(\Motimes_{i=1}^{m-1}\sigma_x \otimes\sigma_x)\otimes \openone_J & \text{if $n$ is odd} \vspace{0.3 cm} \\
        \qty(\Motimes_{i=1}^{m-1}\sigma_x \otimes\sigma_y)\otimes \openone_J & \text{if $n$ is even} \end{cases}
\end{aligned}
\end{equation}
Now, this Eq.~(\ref{fau}) can be expressed through more compact form, by introducing the new notation $B_y\mapsto B_{y,n}$ for any $n-$bit POM task, such that
\begin{eqnarray}
    \mathcal{U}B_{y,n}\mathcal{U}^\dag=
    \begin{cases}
    \qty(B_{y,n}^\prime \otimes \openone_{2^{m-1}})\otimes\openone_J  & \forall y\in \{1,2\} \vspace{0.3 cm}\\
\sigma_{x} \otimes B^\prime_{{y-2},{n-3}} & \forall y \in \{3,4,\dots,n\} 
    \end{cases}
\end{eqnarray}
where $B^\prime_{0,k} = \openone_2, B^\prime_{1,k} = \sigma_z$ and $B^\prime_{2,k} = \sigma_y \  \forall k \in [n-3]$. 


\twocolumngrid
\bibliography{references}

\begin{thebibliography}{61}%
\makeatletter
\providecommand \@ifxundefined [1]{%
 \@ifx{#1\undefined}
}%
\providecommand \@ifnum [1]{%
 \ifnum #1\expandafter \@firstoftwo
 \else \expandafter \@secondoftwo
 \fi
}%
\providecommand \@ifx [1]{%
 \ifx #1\expandafter \@firstoftwo
 \else \expandafter \@secondoftwo
 \fi
}%
\providecommand \natexlab [1]{#1}%
\providecommand \enquote  [1]{``#1''}%
\providecommand \bibnamefont  [1]{#1}%
\providecommand \bibfnamefont [1]{#1}%
\providecommand \citenamefont [1]{#1}%
\providecommand \href@noop [0]{\@secondoftwo}%
\providecommand \href [0]{\begingroup \@sanitize@url \@href}%
\providecommand \@href[1]{\@@startlink{#1}\@@href}%
\providecommand \@@href[1]{\endgroup#1\@@endlink}%
\providecommand \@sanitize@url [0]{\catcode `\\12\catcode `\$12\catcode `\&12\catcode `\#12\catcode `\^12\catcode `\_12\catcode `\%12\relax}%
\providecommand \@@startlink[1]{}%
\providecommand \@@endlink[0]{}%
\providecommand \url  [0]{\begingroup\@sanitize@url \@url }%
\providecommand \@url [1]{\endgroup\@href {#1}{\urlprefix }}%
\providecommand \urlprefix  [0]{URL }%
\providecommand \Eprint [0]{\href }%
\providecommand \doibase [0]{https://doi.org/}%
\providecommand \selectlanguage [0]{\@gobble}%
\providecommand \bibinfo  [0]{\@secondoftwo}%
\providecommand \bibfield  [0]{\@secondoftwo}%
\providecommand \translation [1]{[#1]}%
\providecommand \BibitemOpen [0]{}%
\providecommand \bibitemStop [0]{}%
\providecommand \bibitemNoStop [0]{.\EOS\space}%
\providecommand \EOS [0]{\spacefactor3000\relax}%
\providecommand \BibitemShut  [1]{\csname bibitem#1\endcsname}%
\let\auto@bib@innerbib\@empty
\bibitem [{\citenamefont {Bell}(1964)}]{Bell1964}%
  \BibitemOpen
  \bibfield  {author} {\bibinfo {author} {\bibfnamefont {J.~S.}\ \bibnamefont {Bell}},\ }\bibfield  {title} {\bibinfo {title} {On the {E}instein {P}odolsky {R}osen paradox},\ }\href {https://doi.org/10.1103/PhysicsPhysiqueFizika.1.195} {\bibfield  {journal} {\bibinfo  {journal} {Physics Physique Fizika}\ }\textbf {\bibinfo {volume} {1}},\ \bibinfo {pages} {195} (\bibinfo {year} {1964})}\BibitemShut {NoStop}%
\bibitem [{\citenamefont {Brunner}\ \emph {et~al.}(2014)\citenamefont {Brunner}, \citenamefont {Cavalcanti}, \citenamefont {Pironio}, \citenamefont {Scarani},\ and\ \citenamefont {Wehner}}]{Brunner2014}%
  \BibitemOpen
  \bibfield  {author} {\bibinfo {author} {\bibfnamefont {N.}~\bibnamefont {Brunner}}, \bibinfo {author} {\bibfnamefont {D.}~\bibnamefont {Cavalcanti}}, \bibinfo {author} {\bibfnamefont {S.}~\bibnamefont {Pironio}}, \bibinfo {author} {\bibfnamefont {V.}~\bibnamefont {Scarani}},\ and\ \bibinfo {author} {\bibfnamefont {S.}~\bibnamefont {Wehner}},\ }\bibfield  {title} {\bibinfo {title} {Bell nonlocality},\ }\href {https://doi.org/10.1103/RevModPhys.86.419} {\bibfield  {journal} {\bibinfo  {journal} {Rev. Mod. Phys.}\ }\textbf {\bibinfo {volume} {86}},\ \bibinfo {pages} {419} (\bibinfo {year} {2014})}\BibitemShut {NoStop}%
\bibitem [{\citenamefont {Mayers}\ and\ \citenamefont {Yao}(2004)}]{Mayers2004}%
  \BibitemOpen
  \bibfield  {author} {\bibinfo {author} {\bibfnamefont {D.}~\bibnamefont {Mayers}}\ and\ \bibinfo {author} {\bibfnamefont {A.}~\bibnamefont {Yao}},\ }\href {https://arxiv.org/abs/quant-ph/0307205} {\bibinfo {title} {Self testing quantum apparatus}} (\bibinfo {year} {2004}),\ \Eprint {https://arxiv.org/abs/quant-ph/0307205} {arXiv:quant-ph/0307205 [quant-ph]} \BibitemShut {NoStop}%
\bibitem [{\citenamefont {{\v{S}}upi{\'{c}}}\ and\ \citenamefont {Bowles}(2020)}]{Supic2020}%
  \BibitemOpen
  \bibfield  {author} {\bibinfo {author} {\bibfnamefont {I.}~\bibnamefont {{\v{S}}upi{\'{c}}}}\ and\ \bibinfo {author} {\bibfnamefont {J.}~\bibnamefont {Bowles}},\ }\bibfield  {title} {\bibinfo {title} {Self-testing of quantum systems: a review},\ }\href {https://doi.org/10.22331/q-2020-09-30-337} {\bibfield  {journal} {\bibinfo  {journal} {{Quantum}}\ }\textbf {\bibinfo {volume} {4}},\ \bibinfo {pages} {337} (\bibinfo {year} {2020})}\BibitemShut {NoStop}%
\bibitem [{\citenamefont {Ekert}(1991)}]{Ekert1991}%
  \BibitemOpen
  \bibfield  {author} {\bibinfo {author} {\bibfnamefont {A.~K.}\ \bibnamefont {Ekert}},\ }\bibfield  {title} {\bibinfo {title} {Quantum cryptography based on {B}ell's theorem},\ }\href {https://doi.org/10.1103/PhysRevLett.67.661} {\bibfield  {journal} {\bibinfo  {journal} {Phys. Rev. Lett.}\ }\textbf {\bibinfo {volume} {67}},\ \bibinfo {pages} {661} (\bibinfo {year} {1991})}\BibitemShut {NoStop}%
\bibitem [{\citenamefont {Ac\'{\i}n}\ \emph {et~al.}(2007)\citenamefont {Ac\'{\i}n}, \citenamefont {Brunner}, \citenamefont {Gisin}, \citenamefont {Massar}, \citenamefont {Pironio},\ and\ \citenamefont {Scarani}}]{Acin2007}%
  \BibitemOpen
  \bibfield  {author} {\bibinfo {author} {\bibfnamefont {A.}~\bibnamefont {Ac\'{\i}n}}, \bibinfo {author} {\bibfnamefont {N.}~\bibnamefont {Brunner}}, \bibinfo {author} {\bibfnamefont {N.}~\bibnamefont {Gisin}}, \bibinfo {author} {\bibfnamefont {S.}~\bibnamefont {Massar}}, \bibinfo {author} {\bibfnamefont {S.}~\bibnamefont {Pironio}},\ and\ \bibinfo {author} {\bibfnamefont {V.}~\bibnamefont {Scarani}},\ }\bibfield  {title} {\bibinfo {title} {Device-independent security of quantum cryptography against collective attacks},\ }\href {https://doi.org/10.1103/PhysRevLett.98.230501} {\bibfield  {journal} {\bibinfo  {journal} {Phys. Rev. Lett.}\ }\textbf {\bibinfo {volume} {98}},\ \bibinfo {pages} {230501} (\bibinfo {year} {2007})}\BibitemShut {NoStop}%
\bibitem [{\citenamefont {Liu}\ \emph {et~al.}(2021)\citenamefont {Liu}, \citenamefont {Li}, \citenamefont {Ragy}, \citenamefont {Zhao}, \citenamefont {Bai}, \citenamefont {Liu}, \citenamefont {Brown}, \citenamefont {Zhang}, \citenamefont {Colbeck}, \citenamefont {Fan}, \citenamefont {Zhang},\ and\ \citenamefont {Pan}}]{Liu2021}%
  \BibitemOpen
  \bibfield  {author} {\bibinfo {author} {\bibfnamefont {W.-Z.}\ \bibnamefont {Liu}}, \bibinfo {author} {\bibfnamefont {M.-H.}\ \bibnamefont {Li}}, \bibinfo {author} {\bibfnamefont {S.}~\bibnamefont {Ragy}}, \bibinfo {author} {\bibfnamefont {S.-R.}\ \bibnamefont {Zhao}}, \bibinfo {author} {\bibfnamefont {B.}~\bibnamefont {Bai}}, \bibinfo {author} {\bibfnamefont {Y.}~\bibnamefont {Liu}}, \bibinfo {author} {\bibfnamefont {P.~J.}\ \bibnamefont {Brown}}, \bibinfo {author} {\bibfnamefont {J.}~\bibnamefont {Zhang}}, \bibinfo {author} {\bibfnamefont {R.}~\bibnamefont {Colbeck}}, \bibinfo {author} {\bibfnamefont {J.}~\bibnamefont {Fan}}, \bibinfo {author} {\bibfnamefont {Q.}~\bibnamefont {Zhang}},\ and\ \bibinfo {author} {\bibfnamefont {J.-W.}\ \bibnamefont {Pan}},\ }\bibfield  {title} {\bibinfo {title} {Device-independent randomness expansion against quantum side information},\ }\href {https://doi.org/10.1038/s41567-020-01147-2} {\bibfield  {journal} {\bibinfo  {journal} {Nature Physics}\ }\textbf {\bibinfo {volume}
  {17}},\ \bibinfo {pages} {448} (\bibinfo {year} {2021})}\BibitemShut {NoStop}%
\bibitem [{\citenamefont {Wooltorton}\ \emph {et~al.}(2022)\citenamefont {Wooltorton}, \citenamefont {Brown},\ and\ \citenamefont {Colbeck}}]{Wooltron2022}%
  \BibitemOpen
  \bibfield  {author} {\bibinfo {author} {\bibfnamefont {L.}~\bibnamefont {Wooltorton}}, \bibinfo {author} {\bibfnamefont {P.}~\bibnamefont {Brown}},\ and\ \bibinfo {author} {\bibfnamefont {R.}~\bibnamefont {Colbeck}},\ }\bibfield  {title} {\bibinfo {title} {Tight analytic bound on the trade-off between device-independent randomness and nonlocality},\ }\href {https://doi.org/10.1103/PhysRevLett.129.150403} {\bibfield  {journal} {\bibinfo  {journal} {Phys. Rev. Lett.}\ }\textbf {\bibinfo {volume} {129}},\ \bibinfo {pages} {150403} (\bibinfo {year} {2022})}\BibitemShut {NoStop}%
\bibitem [{\citenamefont {Zhang}\ \emph {et~al.}(2025)\citenamefont {Zhang}, \citenamefont {Li}, \citenamefont {Hu}, \citenamefont {Xiang}, \citenamefont {Li}, \citenamefont {Guo}, \citenamefont {Tura}, \citenamefont {Gong}, \citenamefont {He},\ and\ \citenamefont {Liu}}]{Zhang2025}%
  \BibitemOpen
  \bibfield  {author} {\bibinfo {author} {\bibfnamefont {C.}~\bibnamefont {Zhang}}, \bibinfo {author} {\bibfnamefont {Y.}~\bibnamefont {Li}}, \bibinfo {author} {\bibfnamefont {X.-M.}\ \bibnamefont {Hu}}, \bibinfo {author} {\bibfnamefont {Y.}~\bibnamefont {Xiang}}, \bibinfo {author} {\bibfnamefont {C.-F.}\ \bibnamefont {Li}}, \bibinfo {author} {\bibfnamefont {G.-C.}\ \bibnamefont {Guo}}, \bibinfo {author} {\bibfnamefont {J.}~\bibnamefont {Tura}}, \bibinfo {author} {\bibfnamefont {Q.}~\bibnamefont {Gong}}, \bibinfo {author} {\bibfnamefont {Q.}~\bibnamefont {He}},\ and\ \bibinfo {author} {\bibfnamefont {B.-H.}\ \bibnamefont {Liu}},\ }\bibfield  {title} {\bibinfo {title} {Randomness versus nonlocality in multiple-input and multiple-output quantum scenario},\ }\href {https://doi.org/10.1103/PhysRevLett.134.090201} {\bibfield  {journal} {\bibinfo  {journal} {Phys. Rev. Lett.}\ }\textbf {\bibinfo {volume} {134}},\ \bibinfo {pages} {090201} (\bibinfo {year} {2025})}\BibitemShut {NoStop}%
\bibitem [{\citenamefont {Hensen}\ \emph {et~al.}(2015)\citenamefont {Hensen}, \citenamefont {Bernien}, \citenamefont {Dr{\'e}au}, \citenamefont {Reiserer}, \citenamefont {Kalb}, \citenamefont {Blok}, \citenamefont {Ruitenberg}, \citenamefont {Vermeulen}, \citenamefont {Schouten}, \citenamefont {Abell{\'a}n}, \citenamefont {Amaya}, \citenamefont {Pruneri}, \citenamefont {Mitchell}, \citenamefont {Markham}, \citenamefont {Twitchen}, \citenamefont {Elkouss}, \citenamefont {Wehner}, \citenamefont {Taminiau},\ and\ \citenamefont {Hanson}}]{Hensen2015}%
  \BibitemOpen
  \bibfield  {author} {\bibinfo {author} {\bibfnamefont {B.}~\bibnamefont {Hensen}}, \bibinfo {author} {\bibfnamefont {H.}~\bibnamefont {Bernien}}, \bibinfo {author} {\bibfnamefont {A.~E.}\ \bibnamefont {Dr{\'e}au}}, \bibinfo {author} {\bibfnamefont {A.}~\bibnamefont {Reiserer}}, \bibinfo {author} {\bibfnamefont {N.}~\bibnamefont {Kalb}}, \bibinfo {author} {\bibfnamefont {M.~S.}\ \bibnamefont {Blok}}, \bibinfo {author} {\bibfnamefont {J.}~\bibnamefont {Ruitenberg}}, \bibinfo {author} {\bibfnamefont {R.~F.~L.}\ \bibnamefont {Vermeulen}}, \bibinfo {author} {\bibfnamefont {R.~N.}\ \bibnamefont {Schouten}}, \bibinfo {author} {\bibfnamefont {C.}~\bibnamefont {Abell{\'a}n}}, \bibinfo {author} {\bibfnamefont {W.}~\bibnamefont {Amaya}}, \bibinfo {author} {\bibfnamefont {V.}~\bibnamefont {Pruneri}}, \bibinfo {author} {\bibfnamefont {M.~W.}\ \bibnamefont {Mitchell}}, \bibinfo {author} {\bibfnamefont {M.}~\bibnamefont {Markham}}, \bibinfo {author} {\bibfnamefont {D.~J.}\ \bibnamefont {Twitchen}}, \bibinfo {author}
  {\bibfnamefont {D.}~\bibnamefont {Elkouss}}, \bibinfo {author} {\bibfnamefont {S.}~\bibnamefont {Wehner}}, \bibinfo {author} {\bibfnamefont {T.~H.}\ \bibnamefont {Taminiau}},\ and\ \bibinfo {author} {\bibfnamefont {R.}~\bibnamefont {Hanson}},\ }\bibfield  {title} {\bibinfo {title} {Loophole-free bell inequality violation using electron spins separated by 1.3 kilometres},\ }\href {https://doi.org/10.1038/nature15759} {\bibfield  {journal} {\bibinfo  {journal} {Nature}\ }\textbf {\bibinfo {volume} {526}},\ \bibinfo {pages} {682} (\bibinfo {year} {2015})}\BibitemShut {NoStop}%
\bibitem [{\citenamefont {Bierhorst}\ \emph {et~al.}(2018)\citenamefont {Bierhorst}, \citenamefont {Knill}, \citenamefont {Glancy}, \citenamefont {Zhang}, \citenamefont {Mink}, \citenamefont {Jordan}, \citenamefont {Rommal}, \citenamefont {Liu}, \citenamefont {Christensen}, \citenamefont {Nam}, \citenamefont {Stevens},\ and\ \citenamefont {Shalm}}]{Bierhorst2018}%
  \BibitemOpen
  \bibfield  {author} {\bibinfo {author} {\bibfnamefont {P.}~\bibnamefont {Bierhorst}}, \bibinfo {author} {\bibfnamefont {E.}~\bibnamefont {Knill}}, \bibinfo {author} {\bibfnamefont {S.}~\bibnamefont {Glancy}}, \bibinfo {author} {\bibfnamefont {Y.}~\bibnamefont {Zhang}}, \bibinfo {author} {\bibfnamefont {A.}~\bibnamefont {Mink}}, \bibinfo {author} {\bibfnamefont {S.}~\bibnamefont {Jordan}}, \bibinfo {author} {\bibfnamefont {A.}~\bibnamefont {Rommal}}, \bibinfo {author} {\bibfnamefont {Y.-K.}\ \bibnamefont {Liu}}, \bibinfo {author} {\bibfnamefont {B.}~\bibnamefont {Christensen}}, \bibinfo {author} {\bibfnamefont {S.~W.}\ \bibnamefont {Nam}}, \bibinfo {author} {\bibfnamefont {M.~J.}\ \bibnamefont {Stevens}},\ and\ \bibinfo {author} {\bibfnamefont {L.~K.}\ \bibnamefont {Shalm}},\ }\bibfield  {title} {\bibinfo {title} {Experimentally generated randomness certified by the impossibility of superluminal signals},\ }\href {https://doi.org/10.1038/s41586-018-0019-0} {\bibfield  {journal} {\bibinfo  {journal} {Nature}\
  }\textbf {\bibinfo {volume} {556}},\ \bibinfo {pages} {223} (\bibinfo {year} {2018})}\BibitemShut {NoStop}%
\bibitem [{\citenamefont {Li}\ \emph {et~al.}(2018)\citenamefont {Li}, \citenamefont {Wu}, \citenamefont {Zhang}, \citenamefont {Liu}, \citenamefont {Bai}, \citenamefont {Liu}, \citenamefont {Zhang}, \citenamefont {Zhao}, \citenamefont {Li}, \citenamefont {Wang}, \citenamefont {You}, \citenamefont {Munro}, \citenamefont {Yin}, \citenamefont {Zhang}, \citenamefont {Peng}, \citenamefont {Ma}, \citenamefont {Zhang}, \citenamefont {Fan},\ and\ \citenamefont {Pan}}]{Li2018}%
  \BibitemOpen
  \bibfield  {author} {\bibinfo {author} {\bibfnamefont {M.-H.}\ \bibnamefont {Li}}, \bibinfo {author} {\bibfnamefont {C.}~\bibnamefont {Wu}}, \bibinfo {author} {\bibfnamefont {Y.}~\bibnamefont {Zhang}}, \bibinfo {author} {\bibfnamefont {W.-Z.}\ \bibnamefont {Liu}}, \bibinfo {author} {\bibfnamefont {B.}~\bibnamefont {Bai}}, \bibinfo {author} {\bibfnamefont {Y.}~\bibnamefont {Liu}}, \bibinfo {author} {\bibfnamefont {W.}~\bibnamefont {Zhang}}, \bibinfo {author} {\bibfnamefont {Q.}~\bibnamefont {Zhao}}, \bibinfo {author} {\bibfnamefont {H.}~\bibnamefont {Li}}, \bibinfo {author} {\bibfnamefont {Z.}~\bibnamefont {Wang}}, \bibinfo {author} {\bibfnamefont {L.}~\bibnamefont {You}}, \bibinfo {author} {\bibfnamefont {W.~J.}\ \bibnamefont {Munro}}, \bibinfo {author} {\bibfnamefont {J.}~\bibnamefont {Yin}}, \bibinfo {author} {\bibfnamefont {J.}~\bibnamefont {Zhang}}, \bibinfo {author} {\bibfnamefont {C.-Z.}\ \bibnamefont {Peng}}, \bibinfo {author} {\bibfnamefont {X.}~\bibnamefont {Ma}}, \bibinfo {author} {\bibfnamefont
  {Q.}~\bibnamefont {Zhang}}, \bibinfo {author} {\bibfnamefont {J.}~\bibnamefont {Fan}},\ and\ \bibinfo {author} {\bibfnamefont {J.-W.}\ \bibnamefont {Pan}},\ }\bibfield  {title} {\bibinfo {title} {Test of local realism into the past without detection and locality loopholes},\ }\href {https://doi.org/10.1103/PhysRevLett.121.080404} {\bibfield  {journal} {\bibinfo  {journal} {Phys. Rev. Lett.}\ }\textbf {\bibinfo {volume} {121}},\ \bibinfo {pages} {080404} (\bibinfo {year} {2018})}\BibitemShut {NoStop}%
\bibitem [{\citenamefont {Storz}\ \emph {et~al.}(2023)\citenamefont {Storz}, \citenamefont {Sch{\"a}r}, \citenamefont {Kulikov}, \citenamefont {Magnard}, \citenamefont {Kurpiers}, \citenamefont {L{\"u}tolf}, \citenamefont {Walter}, \citenamefont {Copetudo}, \citenamefont {Reuer}, \citenamefont {Akin}, \citenamefont {Besse}, \citenamefont {Gabureac}, \citenamefont {Norris}, \citenamefont {Rosario}, \citenamefont {Martin}, \citenamefont {Martinez}, \citenamefont {Amaya}, \citenamefont {Mitchell}, \citenamefont {Abellan}, \citenamefont {Bancal}, \citenamefont {Sangouard}, \citenamefont {Royer}, \citenamefont {Blais},\ and\ \citenamefont {Wallraff}}]{Storz2023}%
  \BibitemOpen
  \bibfield  {author} {\bibinfo {author} {\bibfnamefont {S.}~\bibnamefont {Storz}}, \bibinfo {author} {\bibfnamefont {J.}~\bibnamefont {Sch{\"a}r}}, \bibinfo {author} {\bibfnamefont {A.}~\bibnamefont {Kulikov}}, \bibinfo {author} {\bibfnamefont {P.}~\bibnamefont {Magnard}}, \bibinfo {author} {\bibfnamefont {P.}~\bibnamefont {Kurpiers}}, \bibinfo {author} {\bibfnamefont {J.}~\bibnamefont {L{\"u}tolf}}, \bibinfo {author} {\bibfnamefont {T.}~\bibnamefont {Walter}}, \bibinfo {author} {\bibfnamefont {A.}~\bibnamefont {Copetudo}}, \bibinfo {author} {\bibfnamefont {K.}~\bibnamefont {Reuer}}, \bibinfo {author} {\bibfnamefont {A.}~\bibnamefont {Akin}}, \bibinfo {author} {\bibfnamefont {J.-C.}\ \bibnamefont {Besse}}, \bibinfo {author} {\bibfnamefont {M.}~\bibnamefont {Gabureac}}, \bibinfo {author} {\bibfnamefont {G.~J.}\ \bibnamefont {Norris}}, \bibinfo {author} {\bibfnamefont {A.}~\bibnamefont {Rosario}}, \bibinfo {author} {\bibfnamefont {F.}~\bibnamefont {Martin}}, \bibinfo {author} {\bibfnamefont {J.}~\bibnamefont
  {Martinez}}, \bibinfo {author} {\bibfnamefont {W.}~\bibnamefont {Amaya}}, \bibinfo {author} {\bibfnamefont {M.~W.}\ \bibnamefont {Mitchell}}, \bibinfo {author} {\bibfnamefont {C.}~\bibnamefont {Abellan}}, \bibinfo {author} {\bibfnamefont {J.-D.}\ \bibnamefont {Bancal}}, \bibinfo {author} {\bibfnamefont {N.}~\bibnamefont {Sangouard}}, \bibinfo {author} {\bibfnamefont {B.}~\bibnamefont {Royer}}, \bibinfo {author} {\bibfnamefont {A.}~\bibnamefont {Blais}},\ and\ \bibinfo {author} {\bibfnamefont {A.}~\bibnamefont {Wallraff}},\ }\bibfield  {title} {\bibinfo {title} {Loophole-free bell inequality violation with superconducting circuits},\ }\href {https://doi.org/10.1038/s41586-023-05885-0} {\bibfield  {journal} {\bibinfo  {journal} {Nature}\ }\textbf {\bibinfo {volume} {617}},\ \bibinfo {pages} {265} (\bibinfo {year} {2023})}\BibitemShut {NoStop}%
\bibitem [{\citenamefont {Zhao}\ \emph {et~al.}(2024)\citenamefont {Zhao}, \citenamefont {Zhao}, \citenamefont {Dong}, \citenamefont {Liu}, \citenamefont {Chen}, \citenamefont {Chen}, \citenamefont {Zhang},\ and\ \citenamefont {Pan}}]{Zhao2024}%
  \BibitemOpen
  \bibfield  {author} {\bibinfo {author} {\bibfnamefont {S.-R.}\ \bibnamefont {Zhao}}, \bibinfo {author} {\bibfnamefont {S.}~\bibnamefont {Zhao}}, \bibinfo {author} {\bibfnamefont {H.-H.}\ \bibnamefont {Dong}}, \bibinfo {author} {\bibfnamefont {W.-Z.}\ \bibnamefont {Liu}}, \bibinfo {author} {\bibfnamefont {J.-L.}\ \bibnamefont {Chen}}, \bibinfo {author} {\bibfnamefont {K.}~\bibnamefont {Chen}}, \bibinfo {author} {\bibfnamefont {Q.}~\bibnamefont {Zhang}},\ and\ \bibinfo {author} {\bibfnamefont {J.-W.}\ \bibnamefont {Pan}},\ }\bibfield  {title} {\bibinfo {title} {Loophole-free test of local realism via {H}ardy's violation},\ }\href {https://doi.org/10.1103/PhysRevLett.133.060201} {\bibfield  {journal} {\bibinfo  {journal} {Phys. Rev. Lett.}\ }\textbf {\bibinfo {volume} {133}},\ \bibinfo {pages} {060201} (\bibinfo {year} {2024})}\BibitemShut {NoStop}%
\bibitem [{\citenamefont {Brakerski}\ \emph {et~al.}(2018)\citenamefont {Brakerski}, \citenamefont {Christiano}, \citenamefont {Mahadev}, \citenamefont {Vazirani},\ and\ \citenamefont {Vidick}}]{Brakerski2018}%
  \BibitemOpen
  \bibfield  {author} {\bibinfo {author} {\bibfnamefont {Z.}~\bibnamefont {Brakerski}}, \bibinfo {author} {\bibfnamefont {P.}~\bibnamefont {Christiano}}, \bibinfo {author} {\bibfnamefont {U.}~\bibnamefont {Mahadev}}, \bibinfo {author} {\bibfnamefont {U.}~\bibnamefont {Vazirani}},\ and\ \bibinfo {author} {\bibfnamefont {T.}~\bibnamefont {Vidick}},\ }\bibfield  {title} {\bibinfo {title} {A cryptographic test of quantumness and certifiable randomness from a single quantum device},\ }in\ \href {https://doi.org/10.1109/FOCS.2018.00038} {\emph {\bibinfo {booktitle} {2018 IEEE 59th Annual Symposium on Foundations of Computer Science (FOCS)}}}\ (\bibinfo {year} {2018})\ pp.\ \bibinfo {pages} {320--331}\BibitemShut {NoStop}%
\bibitem [{\citenamefont {Um}\ \emph {et~al.}(2020)\citenamefont {Um}, \citenamefont {Zhao}, \citenamefont {Zhang}, \citenamefont {Wang}, \citenamefont {Wang}, \citenamefont {Qiao}, \citenamefont {Zhou}, \citenamefont {Ma},\ and\ \citenamefont {Kim}}]{Um2020}%
  \BibitemOpen
  \bibfield  {author} {\bibinfo {author} {\bibfnamefont {M.}~\bibnamefont {Um}}, \bibinfo {author} {\bibfnamefont {Q.}~\bibnamefont {Zhao}}, \bibinfo {author} {\bibfnamefont {J.}~\bibnamefont {Zhang}}, \bibinfo {author} {\bibfnamefont {P.}~\bibnamefont {Wang}}, \bibinfo {author} {\bibfnamefont {Y.}~\bibnamefont {Wang}}, \bibinfo {author} {\bibfnamefont {M.}~\bibnamefont {Qiao}}, \bibinfo {author} {\bibfnamefont {H.}~\bibnamefont {Zhou}}, \bibinfo {author} {\bibfnamefont {X.}~\bibnamefont {Ma}},\ and\ \bibinfo {author} {\bibfnamefont {K.}~\bibnamefont {Kim}},\ }\bibfield  {title} {\bibinfo {title} {Randomness expansion secured by quantum contextuality},\ }\href {https://doi.org/10.1103/PhysRevApplied.13.034077} {\bibfield  {journal} {\bibinfo  {journal} {Phys. Rev. Appl.}\ }\textbf {\bibinfo {volume} {13}},\ \bibinfo {pages} {034077} (\bibinfo {year} {2020})}\BibitemShut {NoStop}%
\bibitem [{\citenamefont {Pivoluska}\ \emph {et~al.}(2021)\citenamefont {Pivoluska}, \citenamefont {Plesch}, \citenamefont {Farkas}, \citenamefont {Ru{\v{z}}i{\v{c}}kov{\'a}}, \citenamefont {Flegel}, \citenamefont {Valencia}, \citenamefont {McCutcheon}, \citenamefont {Malik},\ and\ \citenamefont {Aguilar}}]{Pivoluska2021}%
  \BibitemOpen
  \bibfield  {author} {\bibinfo {author} {\bibfnamefont {M.}~\bibnamefont {Pivoluska}}, \bibinfo {author} {\bibfnamefont {M.}~\bibnamefont {Plesch}}, \bibinfo {author} {\bibfnamefont {M.}~\bibnamefont {Farkas}}, \bibinfo {author} {\bibfnamefont {N.}~\bibnamefont {Ru{\v{z}}i{\v{c}}kov{\'a}}}, \bibinfo {author} {\bibfnamefont {C.}~\bibnamefont {Flegel}}, \bibinfo {author} {\bibfnamefont {N.~H.}\ \bibnamefont {Valencia}}, \bibinfo {author} {\bibfnamefont {W.}~\bibnamefont {McCutcheon}}, \bibinfo {author} {\bibfnamefont {M.}~\bibnamefont {Malik}},\ and\ \bibinfo {author} {\bibfnamefont {E.~A.}\ \bibnamefont {Aguilar}},\ }\bibfield  {title} {\bibinfo {title} {Semi-device-independent random number generation with flexible assumptions},\ }\href {https://doi.org/10.1038/s41534-021-00387-1} {\bibfield  {journal} {\bibinfo  {journal} {npj Quantum Information}\ }\textbf {\bibinfo {volume} {7}},\ \bibinfo {pages} {50} (\bibinfo {year} {2021})}\BibitemShut {NoStop}%
\bibitem [{\citenamefont {de~Gois}\ \emph {et~al.}(2021)\citenamefont {de~Gois}, \citenamefont {Moreno}, \citenamefont {Nery}, \citenamefont {Brito}, \citenamefont {Chaves},\ and\ \citenamefont {Rabelo}}]{Gois2021}%
  \BibitemOpen
  \bibfield  {author} {\bibinfo {author} {\bibfnamefont {C.}~\bibnamefont {de~Gois}}, \bibinfo {author} {\bibfnamefont {G.}~\bibnamefont {Moreno}}, \bibinfo {author} {\bibfnamefont {R.}~\bibnamefont {Nery}}, \bibinfo {author} {\bibfnamefont {S.}~\bibnamefont {Brito}}, \bibinfo {author} {\bibfnamefont {R.}~\bibnamefont {Chaves}},\ and\ \bibinfo {author} {\bibfnamefont {R.}~\bibnamefont {Rabelo}},\ }\bibfield  {title} {\bibinfo {title} {General method for classicality certification in the prepare and measure scenario},\ }\href {https://doi.org/10.1103/PRXQuantum.2.030311} {\bibfield  {journal} {\bibinfo  {journal} {PRX Quantum}\ }\textbf {\bibinfo {volume} {2}},\ \bibinfo {pages} {030311} (\bibinfo {year} {2021})}\BibitemShut {NoStop}%
\bibitem [{\citenamefont {Gitton}\ and\ \citenamefont {Woods}(2022)}]{Gitton2022}%
  \BibitemOpen
  \bibfield  {author} {\bibinfo {author} {\bibfnamefont {V.}~\bibnamefont {Gitton}}\ and\ \bibinfo {author} {\bibfnamefont {M.~P.}\ \bibnamefont {Woods}},\ }\bibfield  {title} {\bibinfo {title} {Solvable {C}riterion for the {C}ontextuality of any {P}repare-and-{M}easure {S}cenario},\ }\href {https://doi.org/10.22331/q-2022-06-07-732} {\bibfield  {journal} {\bibinfo  {journal} {{Quantum}}\ }\textbf {\bibinfo {volume} {6}},\ \bibinfo {pages} {732} (\bibinfo {year} {2022})}\BibitemShut {NoStop}%
\bibitem [{\citenamefont {Divi{\'a}nszky}\ \emph {et~al.}(2023)\citenamefont {Divi{\'a}nszky}, \citenamefont {M{\'a}rton}, \citenamefont {Bene},\ and\ \citenamefont {V{\'e}rtesi}}]{Divianszky2023}%
  \BibitemOpen
  \bibfield  {author} {\bibinfo {author} {\bibfnamefont {P.}~\bibnamefont {Divi{\'a}nszky}}, \bibinfo {author} {\bibfnamefont {I.}~\bibnamefont {M{\'a}rton}}, \bibinfo {author} {\bibfnamefont {E.}~\bibnamefont {Bene}},\ and\ \bibinfo {author} {\bibfnamefont {T.}~\bibnamefont {V{\'e}rtesi}},\ }\bibfield  {title} {\bibinfo {title} {Certification of qubits in the prepare-and-measure scenario with large input alphabet and connections with the grothendieck constant},\ }\href {https://doi.org/10.1038/s41598-023-39529-0} {\bibfield  {journal} {\bibinfo  {journal} {Scientific Reports}\ }\textbf {\bibinfo {volume} {13}},\ \bibinfo {pages} {13200} (\bibinfo {year} {2023})}\BibitemShut {NoStop}%
\bibitem [{\citenamefont {Navascu\'es}\ \emph {et~al.}(2023)\citenamefont {Navascu\'es}, \citenamefont {P\'al}, \citenamefont {V\'ertesi},\ and\ \citenamefont {Ara\'ujo}}]{Navascues2023}%
  \BibitemOpen
  \bibfield  {author} {\bibinfo {author} {\bibfnamefont {M.}~\bibnamefont {Navascu\'es}}, \bibinfo {author} {\bibfnamefont {K.~F.}\ \bibnamefont {P\'al}}, \bibinfo {author} {\bibfnamefont {T.}~\bibnamefont {V\'ertesi}},\ and\ \bibinfo {author} {\bibfnamefont {M.}~\bibnamefont {Ara\'ujo}},\ }\bibfield  {title} {\bibinfo {title} {Self-testing in prepare-and-measure scenarios and a robust version of wigner's theorem},\ }\href {https://doi.org/10.1103/PhysRevLett.131.250802} {\bibfield  {journal} {\bibinfo  {journal} {Phys. Rev. Lett.}\ }\textbf {\bibinfo {volume} {131}},\ \bibinfo {pages} {250802} (\bibinfo {year} {2023})}\BibitemShut {NoStop}%
\bibitem [{\citenamefont {Svegborn}\ \emph {et~al.}(2025)\citenamefont {Svegborn}, \citenamefont {Pauwels},\ and\ \citenamefont {Tavakoli}}]{Svegborn2025}%
  \BibitemOpen
  \bibfield  {author} {\bibinfo {author} {\bibfnamefont {E.}~\bibnamefont {Svegborn}}, \bibinfo {author} {\bibfnamefont {J.}~\bibnamefont {Pauwels}},\ and\ \bibinfo {author} {\bibfnamefont {A.}~\bibnamefont {Tavakoli}},\ }\href {https://arxiv.org/abs/2504.00162} {\bibinfo {title} {Quantum inputs in the prepare-and-measure scenario and stochastic teleportation}} (\bibinfo {year} {2025}),\ \Eprint {https://arxiv.org/abs/2504.00162} {arXiv:2504.00162 [quant-ph]} \BibitemShut {NoStop}%
\bibitem [{\citenamefont {Ding}\ \emph {et~al.}(2025)\citenamefont {Ding}, \citenamefont {Guo}, \citenamefont {Xu}, \citenamefont {Liu}, \citenamefont {Zou}, \citenamefont {Zhao}, \citenamefont {Ge}, \citenamefont {Zhang}, \citenamefont {Liu}, \citenamefont {Wang}, \citenamefont {Chen}, \citenamefont {Wang}, \citenamefont {He}, \citenamefont {Huo}, \citenamefont {Lu},\ and\ \citenamefont {Pan}}]{Ding2025}%
  \BibitemOpen
  \bibfield  {author} {\bibinfo {author} {\bibfnamefont {X.}~\bibnamefont {Ding}}, \bibinfo {author} {\bibfnamefont {Y.-P.}\ \bibnamefont {Guo}}, \bibinfo {author} {\bibfnamefont {M.-C.}\ \bibnamefont {Xu}}, \bibinfo {author} {\bibfnamefont {R.-Z.}\ \bibnamefont {Liu}}, \bibinfo {author} {\bibfnamefont {G.-Y.}\ \bibnamefont {Zou}}, \bibinfo {author} {\bibfnamefont {J.-Y.}\ \bibnamefont {Zhao}}, \bibinfo {author} {\bibfnamefont {Z.-X.}\ \bibnamefont {Ge}}, \bibinfo {author} {\bibfnamefont {Q.-H.}\ \bibnamefont {Zhang}}, \bibinfo {author} {\bibfnamefont {H.-L.}\ \bibnamefont {Liu}}, \bibinfo {author} {\bibfnamefont {L.-J.}\ \bibnamefont {Wang}}, \bibinfo {author} {\bibfnamefont {M.-C.}\ \bibnamefont {Chen}}, \bibinfo {author} {\bibfnamefont {H.}~\bibnamefont {Wang}}, \bibinfo {author} {\bibfnamefont {Y.-M.}\ \bibnamefont {He}}, \bibinfo {author} {\bibfnamefont {Y.-H.}\ \bibnamefont {Huo}}, \bibinfo {author} {\bibfnamefont {C.-Y.}\ \bibnamefont {Lu}},\ and\ \bibinfo {author} {\bibfnamefont {J.-W.}\ \bibnamefont
  {Pan}},\ }\bibfield  {title} {\bibinfo {title} {High-efficiency single-photon source above the loss-tolerant threshold for efficient linear optical quantum computing},\ }\href {https://doi.org/10.1038/s41566-025-01639-8} {\bibfield  {journal} {\bibinfo  {journal} {Nature Photonics}\ }\textbf {\bibinfo {volume} {19}},\ \bibinfo {pages} {387} (\bibinfo {year} {2025})}\BibitemShut {NoStop}%
\bibitem [{\citenamefont {Galv\~ao}(2001)}]{Galvao2001}%
  \BibitemOpen
  \bibfield  {author} {\bibinfo {author} {\bibfnamefont {E.~F.}\ \bibnamefont {Galv\~ao}},\ }\bibfield  {title} {\bibinfo {title} {Feasible quantum communication complexity protocol},\ }\href {https://doi.org/10.1103/PhysRevA.65.012318} {\bibfield  {journal} {\bibinfo  {journal} {Phys. Rev. A}\ }\textbf {\bibinfo {volume} {65}},\ \bibinfo {pages} {012318} (\bibinfo {year} {2001})}\BibitemShut {NoStop}%
\bibitem [{\citenamefont {Ambainis}\ \emph {et~al.}(2009)\citenamefont {Ambainis}, \citenamefont {Leung}, \citenamefont {Mancinska},\ and\ \citenamefont {Ozols}}]{Ambainis2008}%
  \BibitemOpen
  \bibfield  {author} {\bibinfo {author} {\bibfnamefont {A.}~\bibnamefont {Ambainis}}, \bibinfo {author} {\bibfnamefont {D.}~\bibnamefont {Leung}}, \bibinfo {author} {\bibfnamefont {L.}~\bibnamefont {Mancinska}},\ and\ \bibinfo {author} {\bibfnamefont {M.}~\bibnamefont {Ozols}},\ }\href {https://arxiv.org/abs/0810.2937} {\bibinfo {title} {Quantum random access codes with shared randomness}} (\bibinfo {year} {2009}),\ \Eprint {https://arxiv.org/abs/0810.2937} {arXiv:0810.2937 [quant-ph]} \BibitemShut {NoStop}%
\bibitem [{\citenamefont {Paw\l{}owski}\ and\ \citenamefont {Brunner}(2011)}]{Pawlowski2011}%
  \BibitemOpen
  \bibfield  {author} {\bibinfo {author} {\bibfnamefont {M.}~\bibnamefont {Paw\l{}owski}}\ and\ \bibinfo {author} {\bibfnamefont {N.}~\bibnamefont {Brunner}},\ }\bibfield  {title} {\bibinfo {title} {Semi-device-independent security of one-way quantum key distribution},\ }\href {https://doi.org/10.1103/PhysRevA.84.010302} {\bibfield  {journal} {\bibinfo  {journal} {Phys. Rev. A}\ }\textbf {\bibinfo {volume} {84}},\ \bibinfo {pages} {010302} (\bibinfo {year} {2011})}\BibitemShut {NoStop}%
\bibitem [{\citenamefont {Brunner}\ \emph {et~al.}(2013)\citenamefont {Brunner}, \citenamefont {Navascu\'es},\ and\ \citenamefont {V\'ertesi}}]{Brunner2013}%
  \BibitemOpen
  \bibfield  {author} {\bibinfo {author} {\bibfnamefont {N.}~\bibnamefont {Brunner}}, \bibinfo {author} {\bibfnamefont {M.}~\bibnamefont {Navascu\'es}},\ and\ \bibinfo {author} {\bibfnamefont {T.}~\bibnamefont {V\'ertesi}},\ }\bibfield  {title} {\bibinfo {title} {Dimension witnesses and quantum state discrimination},\ }\href {https://doi.org/10.1103/PhysRevLett.110.150501} {\bibfield  {journal} {\bibinfo  {journal} {Phys. Rev. Lett.}\ }\textbf {\bibinfo {volume} {110}},\ \bibinfo {pages} {150501} (\bibinfo {year} {2013})}\BibitemShut {NoStop}%
\bibitem [{\citenamefont {Tavakoli}\ \emph {et~al.}(2015)\citenamefont {Tavakoli}, \citenamefont {Hameedi}, \citenamefont {Marques},\ and\ \citenamefont {Bourennane}}]{Tavakoli2015}%
  \BibitemOpen
  \bibfield  {author} {\bibinfo {author} {\bibfnamefont {A.}~\bibnamefont {Tavakoli}}, \bibinfo {author} {\bibfnamefont {A.}~\bibnamefont {Hameedi}}, \bibinfo {author} {\bibfnamefont {B.}~\bibnamefont {Marques}},\ and\ \bibinfo {author} {\bibfnamefont {M.}~\bibnamefont {Bourennane}},\ }\bibfield  {title} {\bibinfo {title} {Quantum random access codes using single $d$-level systems},\ }\href {https://doi.org/10.1103/PhysRevLett.114.170502} {\bibfield  {journal} {\bibinfo  {journal} {Phys. Rev. Lett.}\ }\textbf {\bibinfo {volume} {114}},\ \bibinfo {pages} {170502} (\bibinfo {year} {2015})}\BibitemShut {NoStop}%
\bibitem [{\citenamefont {Gu\'erin}\ \emph {et~al.}(2016)\citenamefont {Gu\'erin}, \citenamefont {Feix}, \citenamefont {Ara\'ujo},\ and\ \citenamefont {Brukner}}]{Guerin2016}%
  \BibitemOpen
  \bibfield  {author} {\bibinfo {author} {\bibfnamefont {P.~A.}\ \bibnamefont {Gu\'erin}}, \bibinfo {author} {\bibfnamefont {A.}~\bibnamefont {Feix}}, \bibinfo {author} {\bibfnamefont {M.}~\bibnamefont {Ara\'ujo}},\ and\ \bibinfo {author} {\bibfnamefont {i.~c.~v.}\ \bibnamefont {Brukner}},\ }\bibfield  {title} {\bibinfo {title} {Exponential communication complexity advantage from quantum superposition of the direction of communication},\ }\href {https://doi.org/10.1103/PhysRevLett.117.100502} {\bibfield  {journal} {\bibinfo  {journal} {Phys. Rev. Lett.}\ }\textbf {\bibinfo {volume} {117}},\ \bibinfo {pages} {100502} (\bibinfo {year} {2016})}\BibitemShut {NoStop}%
\bibitem [{\citenamefont {Chailloux}\ \emph {et~al.}(2016)\citenamefont {Chailloux}, \citenamefont {Kerenidis}, \citenamefont {Kundu},\ and\ \citenamefont {Sikora}}]{Chailloux2016}%
  \BibitemOpen
  \bibfield  {author} {\bibinfo {author} {\bibfnamefont {A.}~\bibnamefont {Chailloux}}, \bibinfo {author} {\bibfnamefont {I.}~\bibnamefont {Kerenidis}}, \bibinfo {author} {\bibfnamefont {S.}~\bibnamefont {Kundu}},\ and\ \bibinfo {author} {\bibfnamefont {J.}~\bibnamefont {Sikora}},\ }\bibfield  {title} {\bibinfo {title} {Optimal bounds for parity-oblivious random access codes},\ }\href {https://doi.org/10.1088/1367-2630/18/4/045003} {\bibfield  {journal} {\bibinfo  {journal} {New Journal of Physics}\ }\textbf {\bibinfo {volume} {18}},\ \bibinfo {pages} {045003} (\bibinfo {year} {2016})}\BibitemShut {NoStop}%
\bibitem [{\citenamefont {Van~Himbeeck}\ \emph {et~al.}(2017)\citenamefont {Van~Himbeeck}, \citenamefont {Woodhead}, \citenamefont {Cerf}, \citenamefont {Garc{\'{i}}a-Patr{\'{o}}n},\ and\ \citenamefont {Pironio}}]{Himbeeck2017}%
  \BibitemOpen
  \bibfield  {author} {\bibinfo {author} {\bibfnamefont {T.}~\bibnamefont {Van~Himbeeck}}, \bibinfo {author} {\bibfnamefont {E.}~\bibnamefont {Woodhead}}, \bibinfo {author} {\bibfnamefont {N.~J.}\ \bibnamefont {Cerf}}, \bibinfo {author} {\bibfnamefont {R.}~\bibnamefont {Garc{\'{i}}a-Patr{\'{o}}n}},\ and\ \bibinfo {author} {\bibfnamefont {S.}~\bibnamefont {Pironio}},\ }\bibfield  {title} {\bibinfo {title} {Semi-device-independent framework based on natural physical assumptions},\ }\href {https://doi.org/10.22331/q-2017-11-18-33} {\bibfield  {journal} {\bibinfo  {journal} {{Quantum}}\ }\textbf {\bibinfo {volume} {1}},\ \bibinfo {pages} {33} (\bibinfo {year} {2017})}\BibitemShut {NoStop}%
\bibitem [{\citenamefont {Tavakoli}\ \emph {et~al.}(2018)\citenamefont {Tavakoli}, \citenamefont {Kaniewski}, \citenamefont {V\'ertesi}, \citenamefont {Rosset},\ and\ \citenamefont {Brunner}}]{Tavakoli2018}%
  \BibitemOpen
  \bibfield  {author} {\bibinfo {author} {\bibfnamefont {A.}~\bibnamefont {Tavakoli}}, \bibinfo {author} {\bibfnamefont {J.~m.~k.}\ \bibnamefont {Kaniewski}}, \bibinfo {author} {\bibfnamefont {T.}~\bibnamefont {V\'ertesi}}, \bibinfo {author} {\bibfnamefont {D.}~\bibnamefont {Rosset}},\ and\ \bibinfo {author} {\bibfnamefont {N.}~\bibnamefont {Brunner}},\ }\bibfield  {title} {\bibinfo {title} {Self-testing quantum states and measurements in the prepare-and-measure scenario},\ }\href {https://doi.org/10.1103/PhysRevA.98.062307} {\bibfield  {journal} {\bibinfo  {journal} {Phys. Rev. A}\ }\textbf {\bibinfo {volume} {98}},\ \bibinfo {pages} {062307} (\bibinfo {year} {2018})}\BibitemShut {NoStop}%
\bibitem [{\citenamefont {Mohan}\ \emph {et~al.}(2019)\citenamefont {Mohan}, \citenamefont {Tavakoli},\ and\ \citenamefont {Brunner}}]{Mohan2019}%
  \BibitemOpen
  \bibfield  {author} {\bibinfo {author} {\bibfnamefont {K.}~\bibnamefont {Mohan}}, \bibinfo {author} {\bibfnamefont {A.}~\bibnamefont {Tavakoli}},\ and\ \bibinfo {author} {\bibfnamefont {N.}~\bibnamefont {Brunner}},\ }\bibfield  {title} {\bibinfo {title} {Sequential random access codes and self-testing of quantum measurement instruments},\ }\href {https://doi.org/10.1088/1367-2630/ab3773} {\bibfield  {journal} {\bibinfo  {journal} {New Journal of Physics}\ }\textbf {\bibinfo {volume} {21}},\ \bibinfo {pages} {083034} (\bibinfo {year} {2019})}\BibitemShut {NoStop}%
\bibitem [{\citenamefont {Ambainis}\ \emph {et~al.}(2019)\citenamefont {Ambainis}, \citenamefont {Banik}, \citenamefont {Chaturvedi}, \citenamefont {Kravchenko},\ and\ \citenamefont {Rai}}]{Ambainis2019}%
  \BibitemOpen
  \bibfield  {author} {\bibinfo {author} {\bibfnamefont {A.}~\bibnamefont {Ambainis}}, \bibinfo {author} {\bibfnamefont {M.}~\bibnamefont {Banik}}, \bibinfo {author} {\bibfnamefont {A.}~\bibnamefont {Chaturvedi}}, \bibinfo {author} {\bibfnamefont {D.}~\bibnamefont {Kravchenko}},\ and\ \bibinfo {author} {\bibfnamefont {A.}~\bibnamefont {Rai}},\ }\bibfield  {title} {\bibinfo {title} {Parity oblivious d-level random access codes and class of noncontextuality inequalities},\ }\href {https://doi.org/10.1007/s11128-019-2228-3} {\bibfield  {journal} {\bibinfo  {journal} {Quantum Information Processing}\ }\textbf {\bibinfo {volume} {18}},\ \bibinfo {pages} {111} (\bibinfo {year} {2019})}\BibitemShut {NoStop}%
\bibitem [{\citenamefont {Saha}\ \emph {et~al.}(2019)\citenamefont {Saha}, \citenamefont {Horodecki},\ and\ \citenamefont {Pawłowski}}]{Saha2019}%
  \BibitemOpen
  \bibfield  {author} {\bibinfo {author} {\bibfnamefont {D.}~\bibnamefont {Saha}}, \bibinfo {author} {\bibfnamefont {P.}~\bibnamefont {Horodecki}},\ and\ \bibinfo {author} {\bibfnamefont {M.}~\bibnamefont {Pawłowski}},\ }\bibfield  {title} {\bibinfo {title} {State independent contextuality advances one-way communication},\ }\href {https://doi.org/10.1088/1367-2630/ab4149} {\bibfield  {journal} {\bibinfo  {journal} {New Journal of Physics}\ }\textbf {\bibinfo {volume} {21}},\ \bibinfo {pages} {093057} (\bibinfo {year} {2019})}\BibitemShut {NoStop}%
\bibitem [{\citenamefont {Saha}\ and\ \citenamefont {Chaturvedi}(2019)}]{Saha2019a}%
  \BibitemOpen
  \bibfield  {author} {\bibinfo {author} {\bibfnamefont {D.}~\bibnamefont {Saha}}\ and\ \bibinfo {author} {\bibfnamefont {A.}~\bibnamefont {Chaturvedi}},\ }\bibfield  {title} {\bibinfo {title} {Preparation contextuality as an essential feature underlying quantum communication advantage},\ }\href {https://doi.org/10.1103/PhysRevA.100.022108} {\bibfield  {journal} {\bibinfo  {journal} {Phys. Rev. A}\ }\textbf {\bibinfo {volume} {100}},\ \bibinfo {pages} {022108} (\bibinfo {year} {2019})}\BibitemShut {NoStop}%
\bibitem [{\citenamefont {Miklin}\ \emph {et~al.}(2020)\citenamefont {Miklin}, \citenamefont {Borka\l{}a},\ and\ \citenamefont {Paw\l{}owski}}]{Miklin2020}%
  \BibitemOpen
  \bibfield  {author} {\bibinfo {author} {\bibfnamefont {N.}~\bibnamefont {Miklin}}, \bibinfo {author} {\bibfnamefont {J.~J.}\ \bibnamefont {Borka\l{}a}},\ and\ \bibinfo {author} {\bibfnamefont {M.}~\bibnamefont {Paw\l{}owski}},\ }\bibfield  {title} {\bibinfo {title} {Semi-device-independent self-testing of unsharp measurements},\ }\href {https://doi.org/10.1103/PhysRevResearch.2.033014} {\bibfield  {journal} {\bibinfo  {journal} {Phys. Rev. Res.}\ }\textbf {\bibinfo {volume} {2}},\ \bibinfo {pages} {033014} (\bibinfo {year} {2020})}\BibitemShut {NoStop}%
\bibitem [{\citenamefont {Tavakoli}\ \emph {et~al.}(2020)\citenamefont {Tavakoli}, \citenamefont {Zambrini~Cruzeiro}, \citenamefont {Bohr~Brask}, \citenamefont {Gisin},\ and\ \citenamefont {Brunner}}]{Tavakoli2020}%
  \BibitemOpen
  \bibfield  {author} {\bibinfo {author} {\bibfnamefont {A.}~\bibnamefont {Tavakoli}}, \bibinfo {author} {\bibfnamefont {E.}~\bibnamefont {Zambrini~Cruzeiro}}, \bibinfo {author} {\bibfnamefont {J.}~\bibnamefont {Bohr~Brask}}, \bibinfo {author} {\bibfnamefont {N.}~\bibnamefont {Gisin}},\ and\ \bibinfo {author} {\bibfnamefont {N.}~\bibnamefont {Brunner}},\ }\bibfield  {title} {\bibinfo {title} {Informationally restricted quantum correlations},\ }\href {https://doi.org/10.22331/q-2020-09-24-332} {\bibfield  {journal} {\bibinfo  {journal} {{Quantum}}\ }\textbf {\bibinfo {volume} {4}},\ \bibinfo {pages} {332} (\bibinfo {year} {2020})}\BibitemShut {NoStop}%
\bibitem [{\citenamefont {Mukherjee}\ and\ \citenamefont {Pan}(2021)}]{Mukherjee2021}%
  \BibitemOpen
  \bibfield  {author} {\bibinfo {author} {\bibfnamefont {S.}~\bibnamefont {Mukherjee}}\ and\ \bibinfo {author} {\bibfnamefont {A.~K.}\ \bibnamefont {Pan}},\ }\bibfield  {title} {\bibinfo {title} {Semi-device-independent certification of multiple unsharpness parameters through sequential measurements},\ }\href {https://doi.org/10.1103/PhysRevA.104.062214} {\bibfield  {journal} {\bibinfo  {journal} {Phys. Rev. A}\ }\textbf {\bibinfo {volume} {104}},\ \bibinfo {pages} {062214} (\bibinfo {year} {2021})}\BibitemShut {NoStop}%
\bibitem [{\citenamefont {Wei}\ \emph {et~al.}(2021)\citenamefont {Wei}, \citenamefont {Guo}, \citenamefont {Gao},\ and\ \citenamefont {Wen}}]{Wei2021}%
  \BibitemOpen
  \bibfield  {author} {\bibinfo {author} {\bibfnamefont {S.}~\bibnamefont {Wei}}, \bibinfo {author} {\bibfnamefont {F.}~\bibnamefont {Guo}}, \bibinfo {author} {\bibfnamefont {F.}~\bibnamefont {Gao}},\ and\ \bibinfo {author} {\bibfnamefont {Q.}~\bibnamefont {Wen}},\ }\bibfield  {title} {\bibinfo {title} {Certification of three black boxes with unsharp measurements using 3 → 1 sequential quantum random access codes},\ }\href {https://doi.org/10.1088/1367-2630/abf614} {\bibfield  {journal} {\bibinfo  {journal} {New Journal of Physics}\ }\textbf {\bibinfo {volume} {23}},\ \bibinfo {pages} {053014} (\bibinfo {year} {2021})}\BibitemShut {NoStop}%
\bibitem [{\citenamefont {S.~S.}\ \emph {et~al.}(2023)\citenamefont {S.~S.}, \citenamefont {Mukherjee},\ and\ \citenamefont {Pan}}]{Abhyoudai2023}%
  \BibitemOpen
  \bibfield  {author} {\bibinfo {author} {\bibfnamefont {A.}~\bibnamefont {S.~S.}}, \bibinfo {author} {\bibfnamefont {S.}~\bibnamefont {Mukherjee}},\ and\ \bibinfo {author} {\bibfnamefont {A.~K.}\ \bibnamefont {Pan}},\ }\bibfield  {title} {\bibinfo {title} {Robust certification of unsharp instruments through sequential quantum advantages in a prepare-measure communication game},\ }\href {https://doi.org/10.1103/PhysRevA.107.012411} {\bibfield  {journal} {\bibinfo  {journal} {Phys. Rev. A}\ }\textbf {\bibinfo {volume} {107}},\ \bibinfo {pages} {012411} (\bibinfo {year} {2023})}\BibitemShut {NoStop}%
\bibitem [{\citenamefont {Pauwels}\ \emph {et~al.}(2022)\citenamefont {Pauwels}, \citenamefont {Pironio}, \citenamefont {Woodhead},\ and\ \citenamefont {Tavakoli}}]{Pauwels2022}%
  \BibitemOpen
  \bibfield  {author} {\bibinfo {author} {\bibfnamefont {J.}~\bibnamefont {Pauwels}}, \bibinfo {author} {\bibfnamefont {S.}~\bibnamefont {Pironio}}, \bibinfo {author} {\bibfnamefont {E.}~\bibnamefont {Woodhead}},\ and\ \bibinfo {author} {\bibfnamefont {A.}~\bibnamefont {Tavakoli}},\ }\bibfield  {title} {\bibinfo {title} {Almost qudits in the prepare-and-measure scenario},\ }\href {https://doi.org/10.1103/PhysRevLett.129.250504} {\bibfield  {journal} {\bibinfo  {journal} {Phys. Rev. Lett.}\ }\textbf {\bibinfo {volume} {129}},\ \bibinfo {pages} {250504} (\bibinfo {year} {2022})}\BibitemShut {NoStop}%
\bibitem [{\citenamefont {Bennett}\ and\ \citenamefont {Brassard}(1984)}]{Bennet1984}%
  \BibitemOpen
  \bibfield  {author} {\bibinfo {author} {\bibfnamefont {C.~H.}\ \bibnamefont {Bennett}}\ and\ \bibinfo {author} {\bibfnamefont {G.}~\bibnamefont {Brassard}},\ }\bibfield  {title} {\bibinfo {title} {{Quantum cryptography: Public key distribution and coin tossing}},\ }in\ \href@noop {} {\emph {\bibinfo {booktitle} {Proceedings of IEEE International Conference on Computers, Systems, and Signal Processing}}}\ (\bibinfo {address} {India},\ \bibinfo {year} {1984})\ p.\ \bibinfo {pages} {175}\BibitemShut {NoStop}%
\bibitem [{\citenamefont {Cerf}\ \emph {et~al.}(2002)\citenamefont {Cerf}, \citenamefont {Bourennane}, \citenamefont {Karlsson},\ and\ \citenamefont {Gisin}}]{Cerf2002}%
  \BibitemOpen
  \bibfield  {author} {\bibinfo {author} {\bibfnamefont {N.~J.}\ \bibnamefont {Cerf}}, \bibinfo {author} {\bibfnamefont {M.}~\bibnamefont {Bourennane}}, \bibinfo {author} {\bibfnamefont {A.}~\bibnamefont {Karlsson}},\ and\ \bibinfo {author} {\bibfnamefont {N.}~\bibnamefont {Gisin}},\ }\bibfield  {title} {\bibinfo {title} {Security of quantum key distribution using $\mathit{d}$-level systems},\ }\href {https://doi.org/10.1103/PhysRevLett.88.127902} {\bibfield  {journal} {\bibinfo  {journal} {Phys. Rev. Lett.}\ }\textbf {\bibinfo {volume} {88}},\ \bibinfo {pages} {127902} (\bibinfo {year} {2002})}\BibitemShut {NoStop}%
\bibitem [{\citenamefont {Lunghi}\ \emph {et~al.}(2015)\citenamefont {Lunghi}, \citenamefont {Brask}, \citenamefont {Lim}, \citenamefont {Lavigne}, \citenamefont {Bowles}, \citenamefont {Martin}, \citenamefont {Zbinden},\ and\ \citenamefont {Brunner}}]{Lunghi2015}%
  \BibitemOpen
  \bibfield  {author} {\bibinfo {author} {\bibfnamefont {T.}~\bibnamefont {Lunghi}}, \bibinfo {author} {\bibfnamefont {J.~B.}\ \bibnamefont {Brask}}, \bibinfo {author} {\bibfnamefont {C.~C.~W.}\ \bibnamefont {Lim}}, \bibinfo {author} {\bibfnamefont {Q.}~\bibnamefont {Lavigne}}, \bibinfo {author} {\bibfnamefont {J.}~\bibnamefont {Bowles}}, \bibinfo {author} {\bibfnamefont {A.}~\bibnamefont {Martin}}, \bibinfo {author} {\bibfnamefont {H.}~\bibnamefont {Zbinden}},\ and\ \bibinfo {author} {\bibfnamefont {N.}~\bibnamefont {Brunner}},\ }\bibfield  {title} {\bibinfo {title} {Self-testing quantum random number generator},\ }\href {https://doi.org/10.1103/PhysRevLett.114.150501} {\bibfield  {journal} {\bibinfo  {journal} {Phys. Rev. Lett.}\ }\textbf {\bibinfo {volume} {114}},\ \bibinfo {pages} {150501} (\bibinfo {year} {2015})}\BibitemShut {NoStop}%
\bibitem [{\citenamefont {Miklin}\ and\ \citenamefont {Oszmaniec}(2021)}]{Miklin2021}%
  \BibitemOpen
  \bibfield  {author} {\bibinfo {author} {\bibfnamefont {N.}~\bibnamefont {Miklin}}\ and\ \bibinfo {author} {\bibfnamefont {M.}~\bibnamefont {Oszmaniec}},\ }\bibfield  {title} {\bibinfo {title} {A universal scheme for robust self-testing in the prepare-and-measure scenario},\ }\href {https://doi.org/10.22331/q-2021-04-06-424} {\bibfield  {journal} {\bibinfo  {journal} {Quantum}\ }\textbf {\bibinfo {volume} {5}},\ \bibinfo {pages} {424} (\bibinfo {year} {2021})}\BibitemShut {NoStop}%
\bibitem [{\citenamefont {Spekkens}\ \emph {et~al.}(2009)\citenamefont {Spekkens}, \citenamefont {Buzacott}, \citenamefont {Keehn}, \citenamefont {Toner},\ and\ \citenamefont {Pryde}}]{Spekkens2009}%
  \BibitemOpen
  \bibfield  {author} {\bibinfo {author} {\bibfnamefont {R.~W.}\ \bibnamefont {Spekkens}}, \bibinfo {author} {\bibfnamefont {D.~H.}\ \bibnamefont {Buzacott}}, \bibinfo {author} {\bibfnamefont {A.~J.}\ \bibnamefont {Keehn}}, \bibinfo {author} {\bibfnamefont {B.}~\bibnamefont {Toner}},\ and\ \bibinfo {author} {\bibfnamefont {G.~J.}\ \bibnamefont {Pryde}},\ }\bibfield  {title} {\bibinfo {title} {Preparation contextuality powers parity-oblivious multiplexing},\ }\href {https://doi.org/10.1103/PhysRevLett.102.010401} {\bibfield  {journal} {\bibinfo  {journal} {Phys. Rev. Lett.}\ }\textbf {\bibinfo {volume} {102}},\ \bibinfo {pages} {010401} (\bibinfo {year} {2009})}\BibitemShut {NoStop}%
\bibitem [{\citenamefont {Spekkens}(2005)}]{spekkens2005}%
  \BibitemOpen
  \bibfield  {author} {\bibinfo {author} {\bibfnamefont {R.~W.}\ \bibnamefont {Spekkens}},\ }\bibfield  {title} {\bibinfo {title} {Contextuality for preparations, transformations, and unsharp measurements},\ }\href {https://doi.org/10.1103/PhysRevA.71.052108} {\bibfield  {journal} {\bibinfo  {journal} {Phys. Rev. A}\ }\textbf {\bibinfo {volume} {71}},\ \bibinfo {pages} {052108} (\bibinfo {year} {2005})}\BibitemShut {NoStop}%
\bibitem [{\citenamefont {Harrigan}\ and\ \citenamefont {Spekkens}(2010)}]{harrigan2010}%
  \BibitemOpen
  \bibfield  {author} {\bibinfo {author} {\bibfnamefont {N.}~\bibnamefont {Harrigan}}\ and\ \bibinfo {author} {\bibfnamefont {R.~W.}\ \bibnamefont {Spekkens}},\ }\bibfield  {title} {\bibinfo {title} {Einstein, incompleteness, and the epistemic view of quantum states},\ }\href {https://doi.org/10.1007/s10701-009-9347-0} {\bibfield  {journal} {\bibinfo  {journal} {Foundations of Physics}\ }\textbf {\bibinfo {volume} {40}},\ \bibinfo {pages} {125} (\bibinfo {year} {2010})}\BibitemShut {NoStop}%
\bibitem [{\citenamefont {Ghorai}\ and\ \citenamefont {Pan}(2018)}]{Ghorai2018}%
  \BibitemOpen
  \bibfield  {author} {\bibinfo {author} {\bibfnamefont {S.}~\bibnamefont {Ghorai}}\ and\ \bibinfo {author} {\bibfnamefont {A.~K.}\ \bibnamefont {Pan}},\ }\bibfield  {title} {\bibinfo {title} {Optimal quantum preparation contextuality in an $n$-bit parity-oblivious multiplexing task},\ }\href {https://doi.org/10.1103/PhysRevA.98.032110} {\bibfield  {journal} {\bibinfo  {journal} {Phys. Rev. A}\ }\textbf {\bibinfo {volume} {98}},\ \bibinfo {pages} {032110} (\bibinfo {year} {2018})}\BibitemShut {NoStop}%
\bibitem [{\citenamefont {Pan}(2021)}]{Pan2021}%
  \BibitemOpen
  \bibfield  {author} {\bibinfo {author} {\bibfnamefont {A.~K.}\ \bibnamefont {Pan}},\ }\bibfield  {title} {\bibinfo {title} {Oblivious communication game, self-testing of projective and nonprojective measurements, and certification of randomness},\ }\href {https://doi.org/10.1103/PhysRevA.104.022212} {\bibfield  {journal} {\bibinfo  {journal} {Phys. Rev. A}\ }\textbf {\bibinfo {volume} {104}},\ \bibinfo {pages} {022212} (\bibinfo {year} {2021})}\BibitemShut {NoStop}%
\bibitem [{\citenamefont {Irfan}\ \emph {et~al.}(2020)\citenamefont {Irfan}, \citenamefont {Mayer}, \citenamefont {Ortiz},\ and\ \citenamefont {Knill}}]{Irfan2020}%
  \BibitemOpen
  \bibfield  {author} {\bibinfo {author} {\bibfnamefont {A.~A.~M.}\ \bibnamefont {Irfan}}, \bibinfo {author} {\bibfnamefont {K.}~\bibnamefont {Mayer}}, \bibinfo {author} {\bibfnamefont {G.}~\bibnamefont {Ortiz}},\ and\ \bibinfo {author} {\bibfnamefont {E.}~\bibnamefont {Knill}},\ }\bibfield  {title} {\bibinfo {title} {Certified quantum measurement of majorana fermions},\ }\href {https://doi.org/10.1103/PhysRevA.101.032106} {\bibfield  {journal} {\bibinfo  {journal} {Phys. Rev. A}\ }\textbf {\bibinfo {volume} {101}},\ \bibinfo {pages} {032106} (\bibinfo {year} {2020})}\BibitemShut {NoStop}%
\bibitem [{\citenamefont {Boyd}\ and\ \citenamefont {Vandenberghe}(2004)}]{Boyd2004}%
  \BibitemOpen
  \bibfield  {author} {\bibinfo {author} {\bibfnamefont {S.}~\bibnamefont {Boyd}}\ and\ \bibinfo {author} {\bibfnamefont {L.}~\bibnamefont {Vandenberghe}},\ }\href@noop {} {\emph {\bibinfo {title} {Convex Optimization}}}\ (\bibinfo  {publisher} {Cambridge University Press},\ \bibinfo {year} {2004})\BibitemShut {NoStop}%
\bibitem [{\citenamefont {{Bandyopadhyay}}\ \emph {et~al.}(2002)\citenamefont {{Bandyopadhyay}}, \citenamefont {{Boykin}}, \citenamefont {{Roychowdhury}},\ and\ \citenamefont {{Vatan}}}]{Bandyopadhyay2002}%
  \BibitemOpen
  \bibfield  {author} {\bibinfo {author} {\bibnamefont {{Bandyopadhyay}}}, \bibinfo {author} {\bibnamefont {{Boykin}}}, \bibinfo {author} {\bibnamefont {{Roychowdhury}}},\ and\ \bibinfo {author} {\bibnamefont {{Vatan}}},\ }\bibfield  {title} {\bibinfo {title} {A new proof for the existence of mutually unbiased bases},\ }\href {https://doi.org/10.1007/s00453-002-0980-7} {\bibfield  {journal} {\bibinfo  {journal} {Algorithmica}\ }\textbf {\bibinfo {volume} {34}},\ \bibinfo {pages} {512} (\bibinfo {year} {2002})}\BibitemShut {NoStop}%
\bibitem [{\citenamefont {Wehner}\ and\ \citenamefont {Winter}(2010)}]{Wehner2010}%
  \BibitemOpen
  \bibfield  {author} {\bibinfo {author} {\bibfnamefont {S.}~\bibnamefont {Wehner}}\ and\ \bibinfo {author} {\bibfnamefont {A.}~\bibnamefont {Winter}},\ }\bibfield  {title} {\bibinfo {title} {Entropic uncertainty relations—a survey},\ }\href {https://doi.org/10.1088/1367-2630/12/2/025009} {\bibfield  {journal} {\bibinfo  {journal} {New Journal of Physics}\ }\textbf {\bibinfo {volume} {12}},\ \bibinfo {pages} {025009} (\bibinfo {year} {2010})}\BibitemShut {NoStop}%
\bibitem [{\citenamefont {Slofstra}(2011)}]{Slofstra2011}%
  \BibitemOpen
  \bibfield  {author} {\bibinfo {author} {\bibfnamefont {W.}~\bibnamefont {Slofstra}},\ }\bibfield  {title} {\bibinfo {title} {Lower bounds on the entanglement needed to play {XOR} non-local games},\ }\href {https://doi.org/10.1063/1.3652924} {\bibfield  {journal} {\bibinfo  {journal} {Journal of Mathematical Physics}\ }\textbf {\bibinfo {volume} {52}},\ \bibinfo {pages} {102202} (\bibinfo {year} {2011})},\ \Eprint {https://arxiv.org/abs/https://pubs.aip.org/aip/jmp/article-pdf/doi/10.1063/1.3652924/15796770/102202\_1\_online.pdf} {https://pubs.aip.org/aip/jmp/article-pdf/doi/10.1063/1.3652924/15796770/102202\_1\_online.pdf} \BibitemShut {NoStop}%
\bibitem [{\citenamefont {Saha}\ \emph {et~al.}(2020)\citenamefont {Saha}, \citenamefont {Santos},\ and\ \citenamefont {Augusiak}}]{saha2020}%
  \BibitemOpen
  \bibfield  {author} {\bibinfo {author} {\bibfnamefont {D.}~\bibnamefont {Saha}}, \bibinfo {author} {\bibfnamefont {R.}~\bibnamefont {Santos}},\ and\ \bibinfo {author} {\bibfnamefont {R.}~\bibnamefont {Augusiak}},\ }\bibfield  {title} {\bibinfo {title} {Sum-of-squares decompositions for a family of noncontextuality inequalities and self-testing of quantum devices},\ }\href {https://doi.org/10.22331/q-2020-08-03-302} {\bibfield  {journal} {\bibinfo  {journal} {{Quantum}}\ }\textbf {\bibinfo {volume} {4}},\ \bibinfo {pages} {302} (\bibinfo {year} {2020})}\BibitemShut {NoStop}%
\bibitem [{\citenamefont {Sarkar}\ \emph {et~al.}(2021)\citenamefont {Sarkar}, \citenamefont {Saha}, \citenamefont {Kaniewski},\ and\ \citenamefont {Augusiak}}]{Sarkar2021}%
  \BibitemOpen
  \bibfield  {author} {\bibinfo {author} {\bibfnamefont {S.}~\bibnamefont {Sarkar}}, \bibinfo {author} {\bibfnamefont {D.}~\bibnamefont {Saha}}, \bibinfo {author} {\bibfnamefont {J.}~\bibnamefont {Kaniewski}},\ and\ \bibinfo {author} {\bibfnamefont {R.}~\bibnamefont {Augusiak}},\ }\bibfield  {title} {\bibinfo {title} {Self-testing quantum systems of arbitrary local dimension with minimal number of measurements},\ }\href {https://doi.org/10.1038/s41534-021-00490-3} {\bibfield  {journal} {\bibinfo  {journal} {npj Quantum Information}\ }\textbf {\bibinfo {volume} {7}},\ \bibinfo {pages} {151} (\bibinfo {year} {2021})}\BibitemShut {NoStop}%
\bibitem [{\citenamefont {Pan}(2019)}]{pan2019}%
  \BibitemOpen
  \bibfield  {author} {\bibinfo {author} {\bibfnamefont {A.~K.}\ \bibnamefont {Pan}},\ }\href {https://arxiv.org/abs/1912.10972} {\bibinfo {title} {Revealing universal quantum contextuality through communication games}} (\bibinfo {year} {2019}),\ \Eprint {https://arxiv.org/abs/1912.10972} {arXiv:1912.10972 [quant-ph]} \BibitemShut {NoStop}%
\bibitem [{\citenamefont {Paul}\ \emph {et~al.}(2024)\citenamefont {Paul}, \citenamefont {Sasmal},\ and\ \citenamefont {Pan}}]{Paul2024}%
  \BibitemOpen
  \bibfield  {author} {\bibinfo {author} {\bibfnamefont {R.}~\bibnamefont {Paul}}, \bibinfo {author} {\bibfnamefont {S.}~\bibnamefont {Sasmal}},\ and\ \bibinfo {author} {\bibfnamefont {A.~K.}\ \bibnamefont {Pan}},\ }\bibfield  {title} {\bibinfo {title} {Self-testing of multiple unsharpness parameters through sequential violations of a noncontextual inequality},\ }\href {https://doi.org/10.1103/PhysRevA.110.012444} {\bibfield  {journal} {\bibinfo  {journal} {Phys. Rev. A}\ }\textbf {\bibinfo {volume} {110}},\ \bibinfo {pages} {012444} (\bibinfo {year} {2024})}\BibitemShut {NoStop}%
\bibitem [{\citenamefont {Roy}\ and\ \citenamefont {Pan}(2024)}]{Roy2024}%
  \BibitemOpen
  \bibfield  {author} {\bibinfo {author} {\bibfnamefont {P.}~\bibnamefont {Roy}}\ and\ \bibinfo {author} {\bibfnamefont {A.~K.}\ \bibnamefont {Pan}},\ }\bibfield  {title} {\bibinfo {title} {Generalized parity-oblivious communication games powered by quantum preparation contextuality},\ }\href {https://doi.org/10.1088/1751-8121/ad7108} {\bibfield  {journal} {\bibinfo  {journal} {Journal of Physics A: Mathematical and Theoretical}\ }\textbf {\bibinfo {volume} {57}},\ \bibinfo {pages} {375303} (\bibinfo {year} {2024})}\BibitemShut {NoStop}%
\end{thebibliography}%


\end{document}